\theoremstyle{plain}
\newtheorem{theorem}{Theorem}[section]
\newcommand{\algolong}{MinimizAtion of Cbs Heuristic}
\newcommand{\algo}{MaCH}
\newcommand{\gvns}{GVNS}
\newcommand{\cbs}{CBS}
\newcommand{\cbsp}{CBSP}
\title{Discovering the structure of complex networks by minimizing cyclic 
bandwidth sum}
\author{Ronan Hamon \and Pierre Borgnat \and Patrick Flandrin \and 
C\'eline Robardet}
\date{}
\begin{document}

\maketitle

\begin{abstract}
{Getting a labeling of vertices close to the structure of the graph has been 
proved to be of interest in many applications e.g., to follow smooth signals 
indexed by the vertices of the network. This question can be related to a graph 
labeling problem known as the cyclic bandwidth sum problem. It consists in 
finding a labeling of the vertices of an undirected and unweighted graph with 
distinct integers such that the sum of (cyclic) difference of labels of adjacent 
vertices is minimized. Although theoretical results exist that give optimal 
value of cyclic bandwidth sum for standard graphs, there are neither results in 
the general case, nor explicit methods to reach this optimal result. In addition 
to this lack of theoretical knowledge, only a few methods have been proposed to 
approximately solve this problem. In this paper, we introduce a new heuristic to 
find an approximate solution for the cyclic bandwidth sum problem, by following 
the structure of the graph. The heuristic is a two-step algorithm: the first 
step consists of traversing the graph to find a set of paths which follow the 
structure of the graph, using a similarity criterion based on the Jaccard index 
to jump from one vertex to the next one. The second step is the merging of all 
obtained paths, based on a greedy approach that extends a partial solution by 
inserting a new path at the position that minimizes the cyclic bandwidth sum. 
The effectiveness of the proposed heuristic, both in terms of performance and 
time execution, is shown through experiments on graphs whose optimal value of 
CBS is known as well as on real-world networks, where the consistence between 
labeling and topology is highlighted. An extension to weighted graphs is also 
proposed.}

{cyclic bandwidth sum problem, graph labeling, complex networks, vertex 
labeling, graph structure, graph topology}
		
\end{abstract}

\makeatletter{}\section{Introduction}
\label{sec:introduction}

\subsection{Problem statement}

In many applications, the structure of a complex network gives insights into the 
understanding of the underlying relationships between the vertices: It is 
advantageous to consider the vertices in a consistent order according to the 
topology. A striking example of this is the huge amount of works about the 
detection of communities in a network \cite{Fortunato2010}: finding groups of 
vertices highly connected between them is for instance a powerful tool to 
explain the structure of social networks and to characterize them. The presence 
of communities is only one type of organization encountered in networks, among a 
much large diversity of structures: in many cases, the topology of the network 
is unknown and cannot be fully characterized explicitly. In this situation, it 
is nevertheless beneficial to have a vertex ordering consistent with the 
structure of the network. We can point out for example those related to 
distributed inference over networks \cite{Kar2013}, diffusion \cite{Chamley2013} 
or visualization of networks \cite{Bertrand2013}.

\begin{figure}[!ht]
  \centering
  
    \subfloat[]{
		\includegraphics[width=0.21\columnwidth]{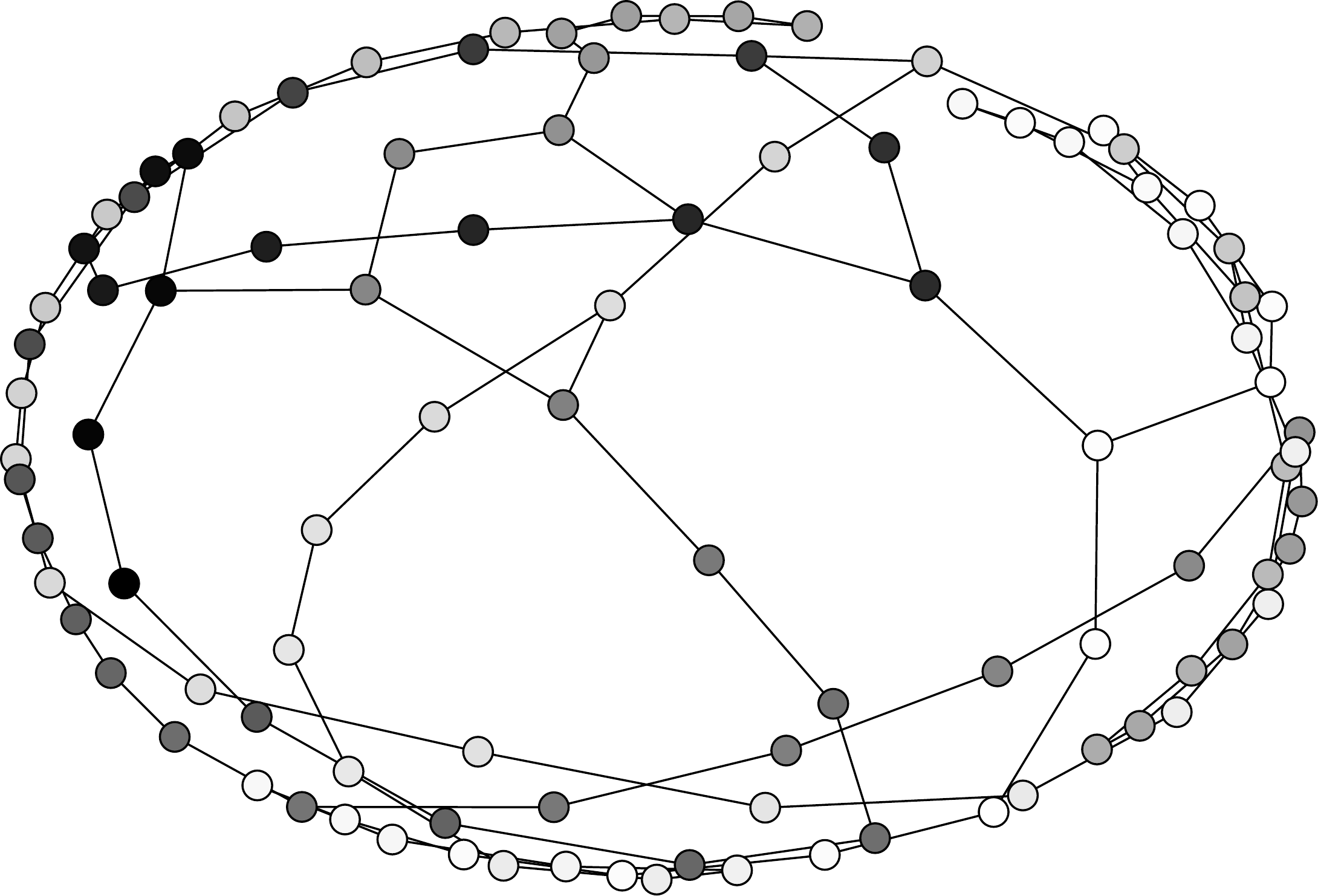}
	}\hspace{0.3cm}
    \subfloat[]{
		\includegraphics[width=0.21\columnwidth]{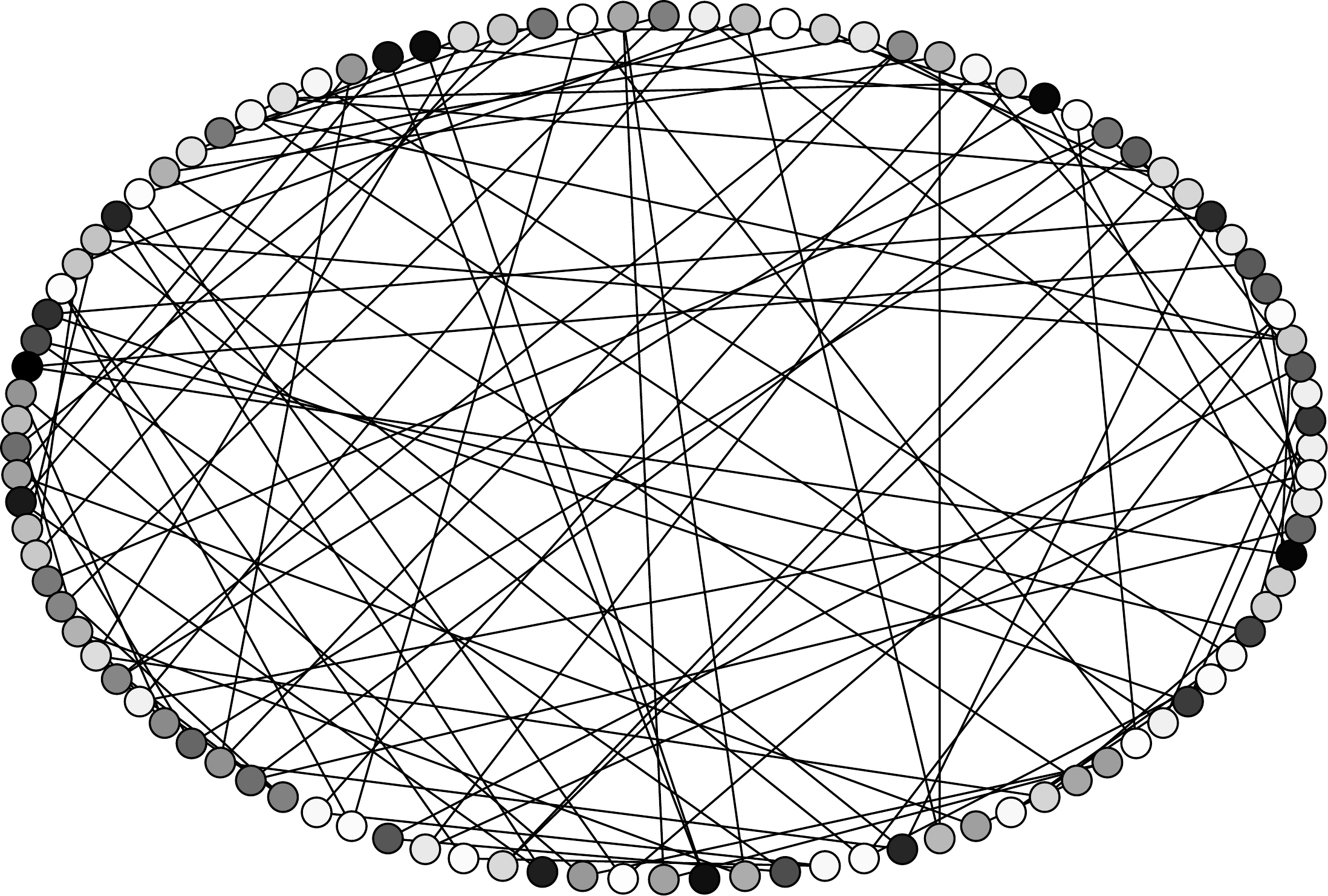}
	}\hspace{0.3cm}
    \subfloat[]{
		\includegraphics[width=0.21\columnwidth]{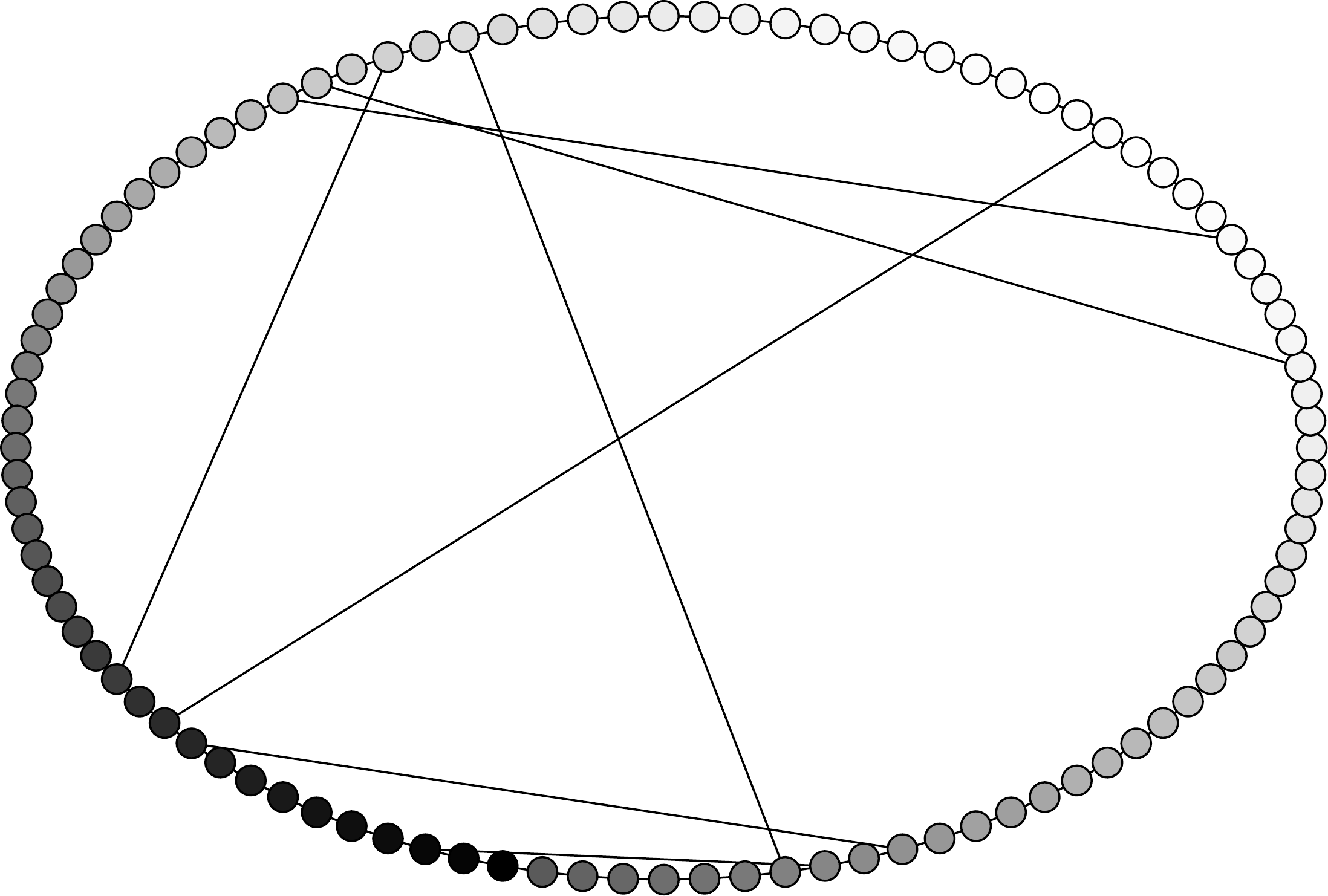}
	}\hspace{0.3cm}
    \subfloat[]{
		\includegraphics[width=0.21\columnwidth]{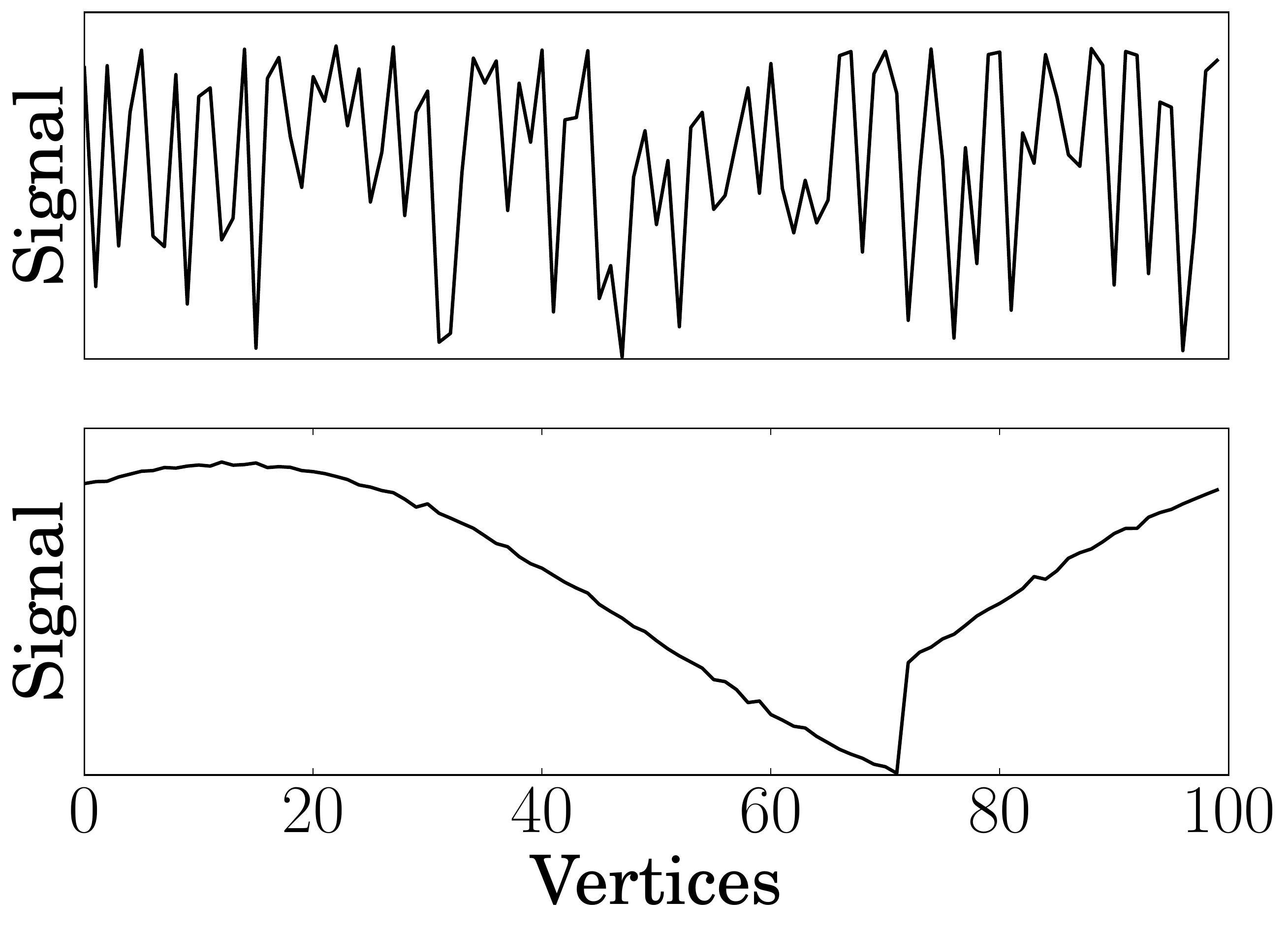}
	}

\caption{\label{fig:son}Example of signals over a network with 50 vertices, with 
a scalar value assigned to each vertex. The color codes the value of the signal 
on each vertex, from black to white. (a) Original network. (b) Circular 
representation of the network according to the index with a random ordering of 
vertices. (c) Circular representation of the network with a suitable ordering of 
vertices. (d) Representation of the signals indexed by the vertices with a 
random vertex ordering (top) and a proper vertex ordering (bottom).}

\end{figure}

A first motivating example comes from the field of signal processing over 
networks, which has been extensively developed in recent years 
\cite{Shuman2013}. Considering a network with an unknown topology, a value is 
assigned to each vertex. This situation could describe for instance a sensor 
network, in which a vertex represents a station which measures a quantity, and 
is in communication with some stations. Different issues come out from this 
example where a proper vertex ordering is of great interest. If we use the 
assumption that the signal is smooth over the network, one question which may 
arise is how to represent the signal according to the vertices in a 
two-dimensional space, to preserve its smoothness. Another point is the 
representation of the network itself, using a linear or circular layout, and how 
to order the vertices to minimize the cross of edges and hence improve the 
visualization. A short example shows that these two questions are linked and can 
be directly addressed if it is possible to obtain an ordering of vertices 
consistent with the topology of the network. Figure~\ref{fig:son} gives an 
example of network with 50 vertices, with a scalar value is assigned to each 
vertex. A random vertex ordering gives both a poor circular representation of 
the network, and a signal with abrupt variations. Conversely, the usefulness of 
a vertex ordering consistent with the structure is clearly visible in the 
circular representation of the network, and gives a smooth representation of the 
signal on the vertices.

\begin{figure}[!ht]
  \centering
  
    \subfloat[\label{subfig:nas_cycle}Cycle graph]{
		\includegraphics[width=0.21\columnwidth]{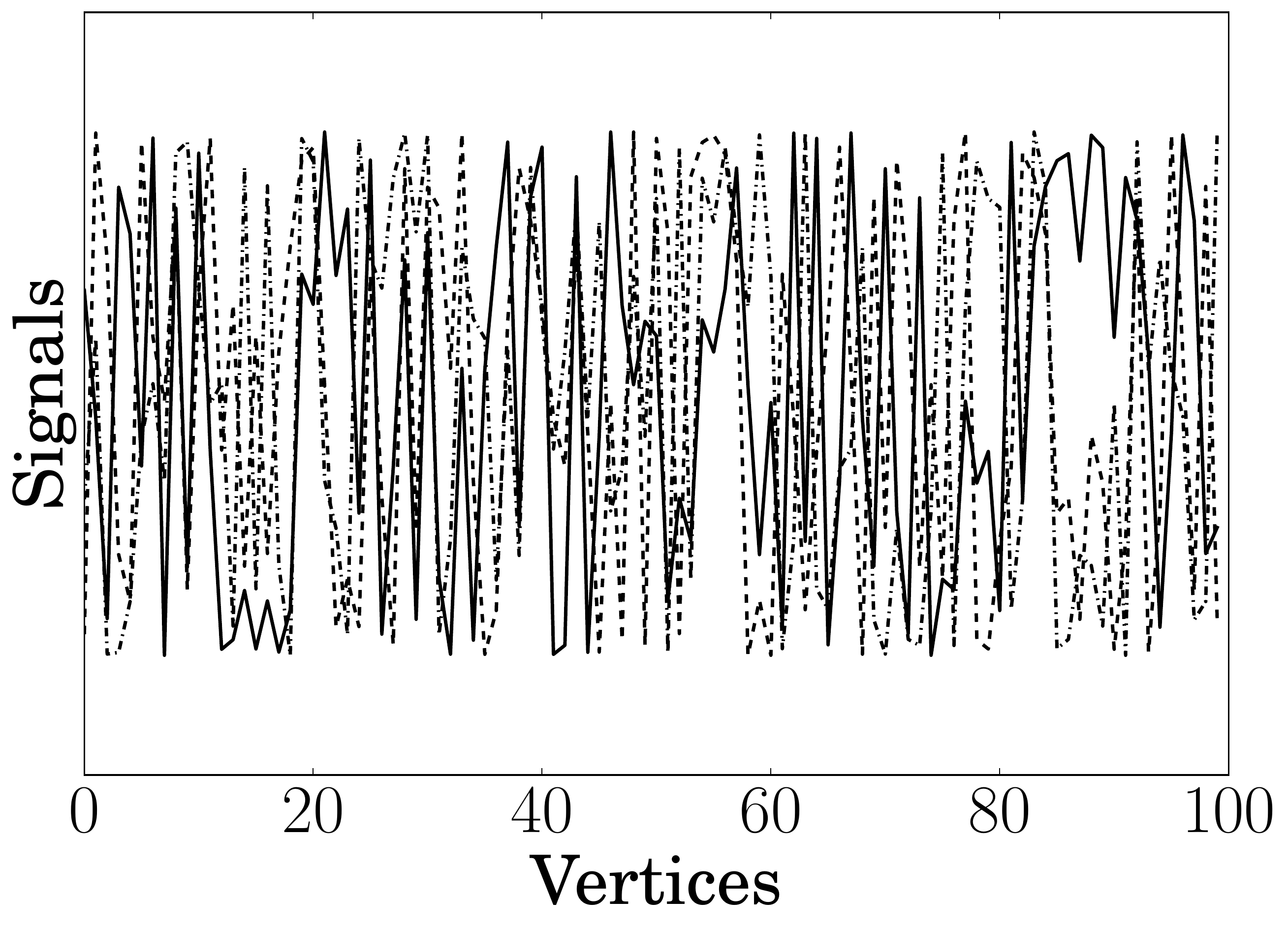}
		\includegraphics[width=0.21\columnwidth]{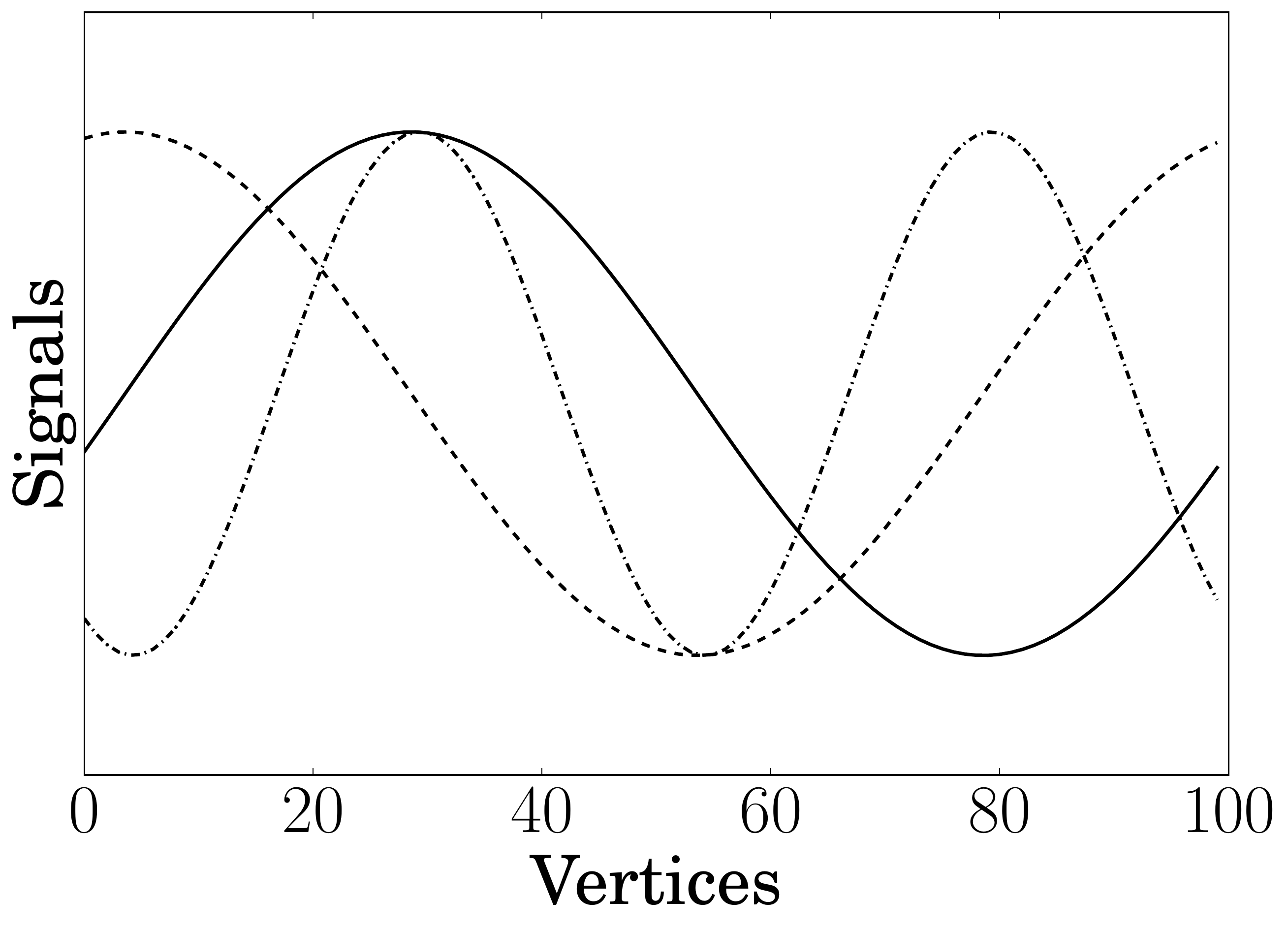}
	} \hspace{1cm}
	\subfloat[\label{subfig:nas_com}Graph with four communities]{
		\includegraphics[width=0.21\columnwidth]{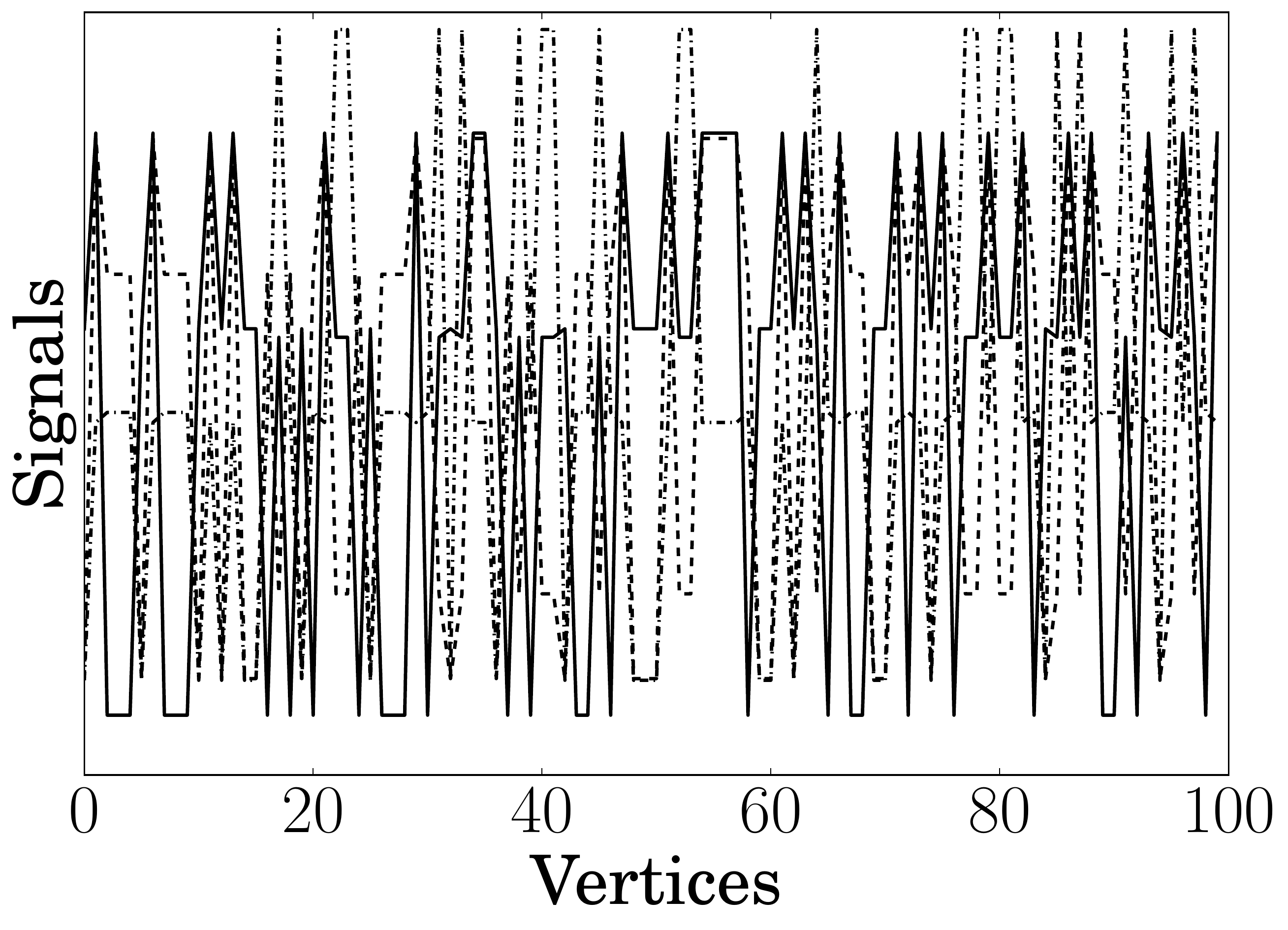}
		\includegraphics[width=0.21\columnwidth]{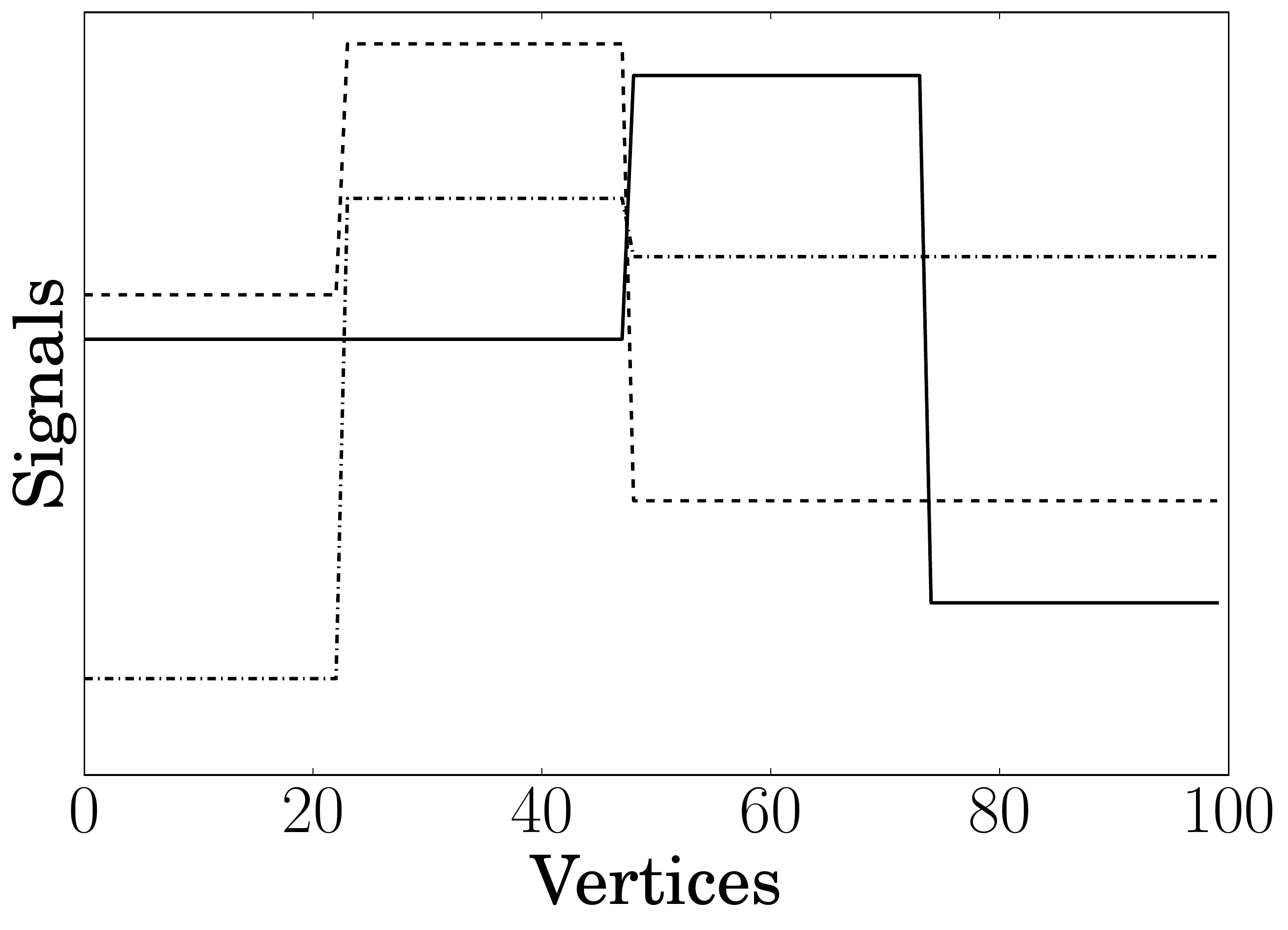}
	}

\caption{\label{fig:nas}Examples of transformation of a network into signals, 
indexed by the vertices using \cite{Shimada2012}. The resulting collection of 
signals is indexed by the vertices. The first three signals are displayed on 
each subplot (left) Random ordering of vertices (right) Suitable ordering of 
vertices.}

\end{figure}

A second example derives from methods of duality between signals and networks, 
which can be taken up using two approaches depending on the object of interest: 
from signals to graphs, for example for the study of time series 
\cite{Campanharo2011}, or from networks to signals, as in \cite{Weng2014} or 
\cite{Shimada2012}. The latter method intends to transform a graph into a 
collection of signals using classical multidimensional scaling \cite{Borg2005} 
and has been extended in \cite{Hamon2013a, Hamon2013b, Hamon2013c, Hamon2014}. 
The aim in these works is to exhibit specific frequency patterns and link them 
with topological properties of the underlying network. A major issue of the 
transformation from graphs to signals concerns the indexation of signals, which 
is based on the vertex order. If two neighboring vertices in this order are not 
adjacent in the graph, then their values in signals might be different and the 
signals are blurred: It leads to abrupt variations of the signals over vertices 
which complicate the spectral analysis and then the monitoring of frequency 
patterns. To smooth the signals, the vertex order must take into account 
adjacency or at least geodesic distance, in other words the vertex order must 
reflect as much as possible the structure of the graph. Fig.~\ref{fig:nas} shows 
two examples of the consequence of a poor indexation to the resulting signals: 
transformation of a cycle graph leads to smooth harmonic oscillations if the 
labeling follows the cycle (\ref{subfig:nas_cycle} right), but to high-frequency 
signals if the labeling is random (\ref{subfig:nas_cycle} left). Likewise, 
transformation of a graph with communities leads to signals with many abrupt 
variations (\ref{subfig:nas_com} left), when an indexation that is consistent 
with the structure in communities highlights plateaus, corresponding to each 
community. It is easy to observe that a spectral analysis on the obtained 
signals might be profitable only if the indexation is consistent with the 
topology. This consistency can be described as follows: Two vertices close in 
the indexation should be close as well in the graph. But as we may have to deal 
with periodic signals, the definition of the proximity (and then of the 
distance) has to be cyclic, i.e. if the vertex at the beginning of the 
indexation and the one at the end have to be close in the network.

We propose in this paper to find such an order that reflects the topology of the 
underlying network. The core of the method consists of the study of a related 
labeling problem, which seek for a mapping from vertices to integers, in such a 
way that an objective function is minimized. This approach is widely made 
explicit in the following.

\subsection{General framework of graph labeling}

Graph labeling consists of assigning labels to vertices or edges of a graph. The 
way in which these labels are classically assigned are driven by the 
minimization of a certain objective function, defined for the purposes of a 
specific application. There exists a wide variety of labeling problems that are 
related to distinctive applications, as described by Diaz et al \cite{Diaz2002}. 
We focus in this article on the labeling of vertices of an unweighted and 
undirected graph, with the objective to find a graph labeling which reflects the 
topology of the graph. As described above, such a labeling shall minimize the 
distances between labels of adjacent vertices, which could be a challenge with 
high stake for many applications. We propose in the following to traverse the 
graph by tackling the cyclic bandwidth sum problem, which consists in minimizing 
the distance between the labels of pairs of connected vertices in the graph.

Let $G=(V,E)$ be a simple connected, unweighted and undirected graph with $V$ 
the set of vertices, and $E$ the set of edges. The number of vertices is noted 
$n = \# V$. Chung \cite{Chung1988} proposed a framework which encompasses many 
graph labeling problems. It is based on a mapping between $V$ and the set of 
vertices of a host graph $H = (N, E_H)$ with $N = \{0, \cdots, n-1\}$. Graph 
labeling problems are then defined as finding the best mapping $\pi$ from $V$ to 
$N$, according to minimization or maximization of an objective function often 
using distances between labels taken between pairs of adjacent vertices of $G$. 
This distance, noted $d_H$, is defined as the length of the shortest path 
between the corresponding vertices in the host graph $H$. Two possible objective 
functions are often considered:

\begin{enumerate}
\item the maximum distance $d_H$ between the labels of two adjacent vertices of 
$G$ is minimized, i.e. it amounts to find a labeling $\hat{\pi}$ such 
that:
  \begin{align}
    \hat{\pi} = \arg\min_\pi\max_{\{u,v\}\in E}d_H(\pi[u],\pi[v])
  \end{align}
\item the sum of distances $d_H$ between all pairs of adjacent vertices of $G$ 
is minimized, i.e. it amounts to find a labeling $\hat{\pi}$ such that: 
  \begin{align}
    \hat{\pi} = \arg\min_{\mathbf{\pi}} \sum_{\{u,v\}\in E} d_H(\pi[u],\pi[v])
  \end{align}
\end{enumerate}

The resulting graph labeling problems have been extensively studied in the case 
where the host graph is a path graph $P$, where $E_P=\lbrace \{i,i+1\}\mid 
i=0\dots n-2\rbrace$. The length of the shortest path between two vertices $u$ 
and $v$ in this graph is given by:
\begin{align}
d_P(\pi[u], \pi[v]) = \vert \pi[u]-\pi[v] \vert
\end{align}
These problems are called bandwidth problem (condition 1) and bandwidth sum 
problem (condition 2).

Lin \cite{Lin1994} and Jianxiu \cite{Jianxiu2001} introduced the problems where 
the host graph is a cycle $C$, where $E_C=\lbrace \{i,i+1\}\mid i=0\dots 
n-2\rbrace \bigcup \{n-1,0\}$. In this case, the distance 
between two vertices $u,v \in V$ is given by:
\begin{align}
\label{eq:dc}
d_C(\pi[u], \pi[v]) = \min\lbrace\vert \pi[u]-\pi[v]\vert, n-\vert \pi[u]-\pi[v] 
\vert\rbrace
\end{align}

The resulting problems are called cyclic bandwidth problem (condition 1) and 
cyclic bandwidth sum problem (condition 2). We focus in this paper on the cyclic 
bandwidth sum problem (\cbsp{}). It is thus defined as the minimization of a 
quantity called cyclic bandwidth sum (\cbs{}):
\begin{eqnarray}
\label{eq:cbsp}
\min_{\mathbf{\pi}} \cbs(G) = \min_{\mathbf{\pi}} \sum_{\{u,v\}\in E} 
d_C(\pi[u], \pi[v])
\end{eqnarray}

Examples of optimal labeling solving Eq.~(\ref{eq:cbsp}) are shown in 
Fig.~\ref{fig:examples} for some standard graphs. We can see that the labeling 
closely follows the structure.

These problems are generally NP-hard problems, as shown by Papadimitriou for the 
bandwidth problem \cite{Papadimitriou1976}  and Lin  for the cyclic bandwidth 
problem \cite{Lin1994}.

\subsection{Related works}

Many works have been done on the study of labeling graph problems: as mentioned 
previously, D\'iaz \cite{Diaz2002} proposed a review of several graph labeling 
problems from an algorithmic point of view. Among these problems, the bandwidth 
problem and bandwidth sum problem have been extensively studied: Papadimitriou 
\cite{Papadimitriou1976} proves the NP-Complenetess of the bandwidth problem, 
highlighting the necessity of heuristics, as the one developed by Cuthill et al 
\cite{Cuthill1969}, to find efficiently a good labeling for these problems. Some 
studies have also been performed on other graph labeling problems, such that 
cyclic bandwidth problem \cite{Lin1994, Romero-Monsivais2013}, antibandwidth 
problem \cite{Calamoneri2006} or cyclic antibandwidth problem \cite{Lozano2013}, 
both in terms of theoretical results and algorithms. Conversely, only few 
results are available in the literature for solving the cyclic bandwidth sum 
problem. Two articles focus on the mathematical aspects of this problem: Jianxiu 
\cite{Jianxiu2001} introduced cyclic bandwidth sum problem and proposed 
theoretical results for some standard graphs, such as wheel or $k$-regular 
graphs, in terms of optimal value of \cbs{} or upper bounds for this value. 
Later on, Chen et al. \cite{Chen2007} extend this work by adding some results, 
for instance for complete bipartite graphs. Whereas these theoretical results do 
not help to get the optimal labeling of a graph, they are nonetheless useful to 
evaluate the quality of a solution of the \cbs{} problem, especially when it is 
obtained thanks to heuristic algorithms.  To the best of our knowledge, only one 
heuristic was proposed to solve the cyclic bandwidth sum problem, published in 
\cite{Satsangi2012} and extended in \cite{Satsangi2013}. The heuristic is based 
on a general variable neighborhood search (\gvns{}). The idea of \gvns{} is to 
change the labeling both globally and locally to descent to local minima of 
\cbs{}, using two distinct phases: A shaking phase in which the labeling is 
changed by applying several operations where the vertices are either shifted, 
reversed, flipped or swapped without taking into account the proximity of 
vertices. This operation enables the algorithm to escape from valleys and to 
browse the solution space. A local search is then performed to descent in a 
valley to a local minima and is performed by switching consecutive vertex or 
swapping adjacent vertices whose edge distance (see Eq.~\ref{eq:dc}) is the 
highest.

\subsection{Outline}

The following sections present the heuristic we developed to address the cyclic 
bandwidth sum problem efficiently, that will be called \algo{} for \algolong{}. 
Section~\ref{sec:proposed_method} sketches the principles of the proposed 
method. Detailed algorithms are presented in Section~\ref{sec:algorithm}, while 
a worst-case complexity study is given in Section~\ref{sec:complexity}. The 
performance of the algorithm is investigated in Section~\ref{sec:results} 
through the comparison of the solution of our algorithm with the theoretical 
results when available, or with \gvns{}. A study is then performed in 
Section~\ref{sec:complex} on graphs exhibiting common properties encountered in 
real-world networks, with a qualitative approach to visually validate the 
performance of the heuristic to discover the structure of complex networks. 
Section~\ref{sec:extension} discusses extensions of the proposed method to 
handle weighted graphs and Section~\ref{sec:conclusion} concludes the paper. 
\makeatletter{}\section{Heuristic to minimize the Cyclic Bandwidth Sum of a graph}
\label{sec:proposed_method}

\begin{figure}[htp]
  \centering
	\subfloat[Path \label{subfig:path}]{
		\includegraphics[width=0.30\columnwidth]{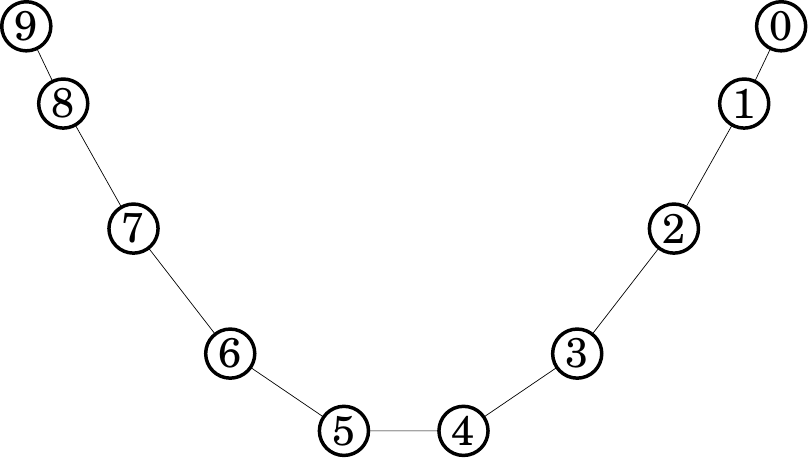}
	}
	\subfloat[Cycle\label{subfig:cycle}] ]{
		 \includegraphics[width=0.30\columnwidth]{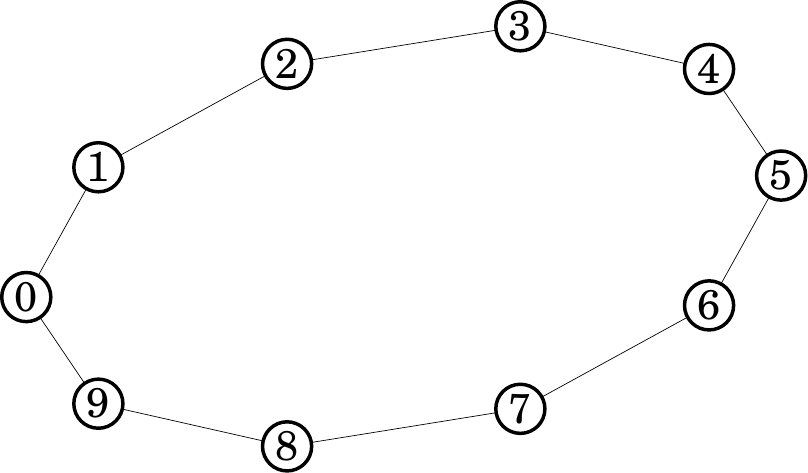}
	}
	\subfloat[Wheel]{
		\includegraphics[width=0.30\columnwidth]{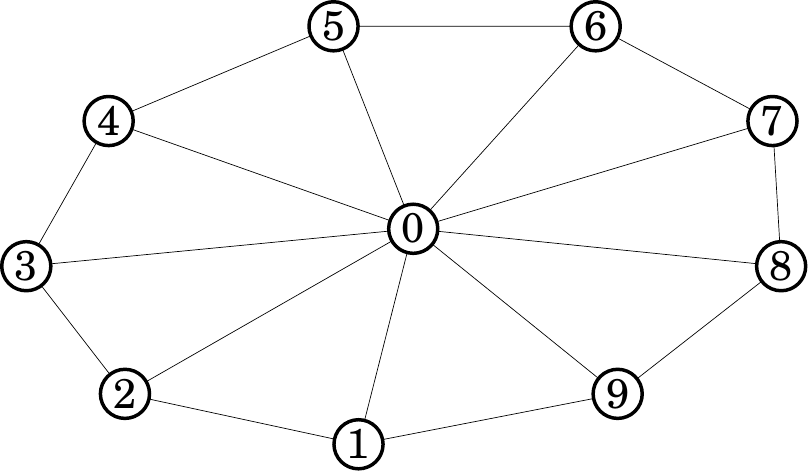}
	}
	
	\subfloat[Complete bipartite (5, 5)]{
		 \includegraphics[width=0.30\columnwidth]{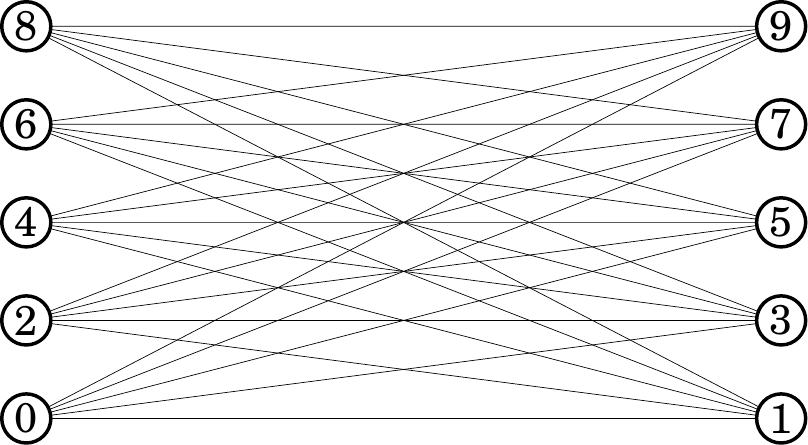}
	}
	\subfloat[Complete bipartite (8, 2)]{
		\includegraphics[width=0.30\columnwidth]{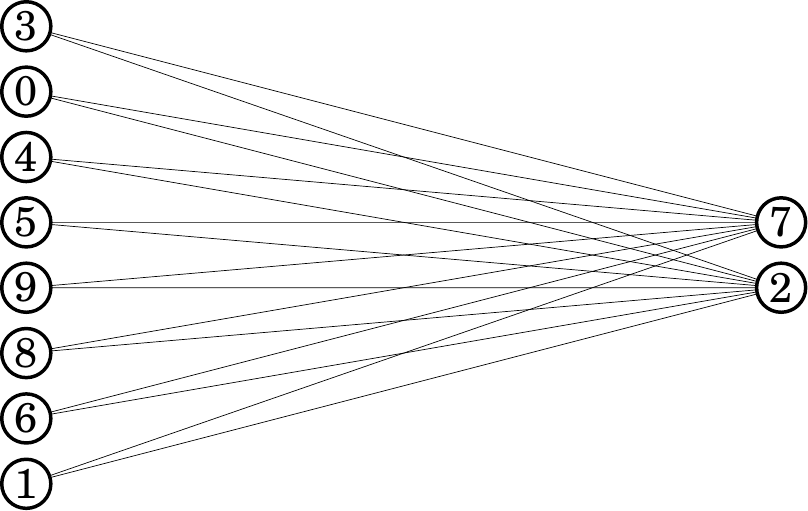}
	}
	\subfloat[Graph with cliques \label{subfig:cliques}]{
		 \includegraphics[width=0.30\columnwidth]{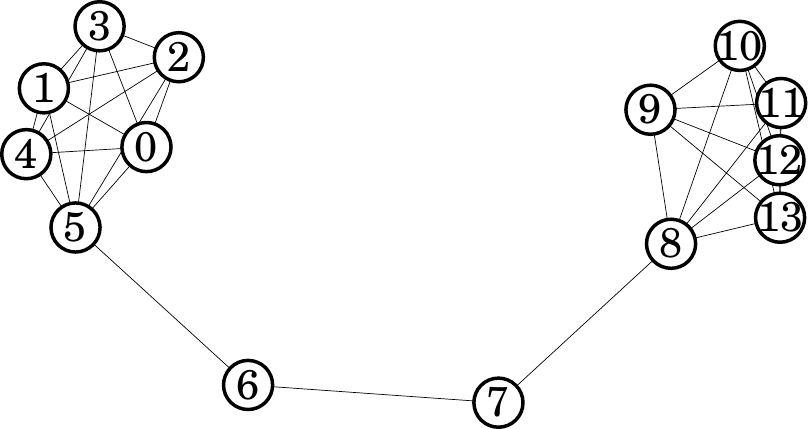}
	}
	
\caption{\label{fig:examples}Examples of standard graphs with optimal     
labeling minimizing the CBS.}
\end{figure}

The aim of the heuristic is to build a labeling by traversing the graph to 
discover its structure. The vertex labels are then constrained by the 
regularities of the structure, which may have multiple forms. For instance, in 
the simple case of a cycle (see Fig.~\ref{subfig:cycle}), the correct behavior 
of the algorithm should be as follows: Starting from a random vertex, it will 
label it and recursively jump to one of its unlabeled adjacent vertices, label 
it with the next integer, and so forth until all vertices are labeled. In the 
less trivial case where the graph is organized by several cliques (see 
Fig.~\ref{subfig:cliques}), the algorithm should browse all the vertices inside 
a clique before jumping to another one. More generally, the algorithm has to 
adapt its search to the structure of the graph, whatever the structure is.

One solution to achieve this goal is to perform a self-avoiding random walk on 
the graph that successively numbers the vertices when they are reached. 
However, this approach has one drawback: the choice of the next vertex 
depends only on the neighborhood of the current vertex, and not on a more 
extended neighborhood. It implies that the random walk has to be controlled to 
avoid going to vertices already numbered, and the walk can stop before visiting 
all the vertices.

The heuristic we propose below fills in the gaps of a random walk and consists 
of a two-step algorithm. The first step performs local searches in order to find 
a collection of independent paths with respect to the local structure of the 
graph, while the second step determines the best way to arrange the paths such 
that the objective function of the \cbsp{} is minimized.

\subsection{Step 1: Guiding the search towards locally similar vertices}
\label{subsec:step1}

The heuristic starts taking as input a graph where each vertex can be referred 
by a unique identifier. The first step consists in finding a collection of paths 
in the graph, that is to say some sequences of vertices consecutively connected. 
The algorithm performs a depth-first search in which the next vertex is chosen 
based on its similarity to the current vertex. This similarity depends on the 
intersection of the two vertex neighborhoods: the more the neighborhoods of the 
two vertices intersect, the closer their labels are. Concretely the search is 
executed as follows: Starting from a vertex, the algorithm jumps to the most 
similar neighbor not yet labeled, and so on until there is no more accessible 
vertices. Then, the algorithm starts a new path from a vertex which has not been 
yet inserted in a path, and then continues to build paths until all the vertices 
are in a path. At the end of this step, a collection of paths is obtained that 
partitions the graph vertex set.

\paragraph{Initialization}

Any vertex not yet inserted in a path can be used as starting node. However, to 
favor the computation of longer paths, vertices that are at the periphery of the 
graph are preferred. The incentive behind this choice lies on the fact that the 
path should start at one of the extremity of the graph if there is one. For 
example, let us consider a simple path graph: Starting from a vertex in the 
middle of the path will generate two paths, although it is obvious that the 
graph can be traversed using a single path. There are several measures to 
determine the centrality of a vertex, that can also be used to find vertices 
that are outer of the graph. We chose the simplest one, to minimize the 
computational cost, by namely using the degree of the vertices: the vertex with 
the smallest degree is selected to start the path.

\paragraph{Construction of a path}

A path is obtained by performing a depth-first search where the next adjacent 
vertex is the one that (1) is not labeled and (2) has a neighborhood that is the 
most similar to the one of the current vertex.  The neighborhood similarity of 
two vertices is evaluated based on the Jaccard index \cite{Jaccard1901}, a 
quantity used to compare the similarity between two sets by looking at the total 
number of common elements (including the considered vertices) over the total 
number of elements. Let $\texttt{Adj}(u)$ returns the adjacent vertices of the 
vertex $u$, i.e. the neighborhood of $u$. The similarity index between the 
vertex $u$ and $v$, noted $J(u,v)$, is defined by:
\begin{align}
\label{eq:jaccard}
 J(u,v) = \frac{\#\Big((\texttt{Adj}(u) \cap  
\texttt{Adj}(v))\cup\{u,v\}\Big)}{\#\Big(\texttt{Adj}(u) 
\cup \texttt{Adj}(v)\Big)} 
\end{align}

This measure is equal to 1 if the two vertices have the same adjacent vertices, 
otherwise it is strictly lower than 1. A value close to 0 means that the total 
number of vertices in the two neighborhoods is much higher than the number of 
common neighbors.

It may happen that two neighbors of the current vertex $u$ have the same 
similarity index with $u$. In this case, the selected vertex is the first vertex 
encountered by the algorithm.

It is preferable that the adjacent vertices of degree 1 that are only adjacent, 
to the current vertex, are not chosen as the following vertices, despite their 
high similarity, because it would end up the path. These vertices are 
immediately inserted after their unique neighbor to guarantee that the vertices 
are as close in the labeling as they are in the graph. However we let the 
traversal 
continue.

\paragraph{End of the search}

The search for a path ends when all the neighbors of the current vertex have 
been inserted in a path. The algorithm starts a new path using the remaining 
vertices, until all the vertices belong to a path.

\subsection{Step 2: Greedy merge of paths}
\label{subsec:step2}

The second step aims at aggregating the paths obtained in Step 1 in a unique 
labeling in such a way that the CBS is minimized. The results of this step is a 
list of vertices, where the position of the vertex in the list gives its label. 
We perform a greedy search that takes the locally optimal choice while merging a 
new path in the partial labeling under construction: The algorithm computes the 
\cbs{} value for the insertion of the path and the reverse of the path at each 
position of the current partial solution and retains the argument that minimizes 
the \cbs{}. The paths are selected in turns according to their length, the 
largest one being selected first. The rational behind this choice is so that to 
broadly explore the space of solutions.

\paragraph{Incremental computing of the CBS}

The evaluating of the \cbs{}, as given in Eq.~(\ref{eq:cbsp}) is demanding 
computationally, as it requires considering every edge of the graph. For each 
insertion of path,  the current \cbs{} is computed twice (ordered and reverse 
ordered) for each possible insertion index of the current labeling. It is thus 
very costly, but can be largely alleviated by observing that, from an index to 
the next one, many edges have the same contribution in the total \cbs{} value. 
From this perspective, we propose an incremental update of the \cbs{} to take 
into account the state before the shift: At each update, only the edges whose 
adjacent vertex labels have been modified are considered.

To explain the incremental computation of the \cbs{}, let us consider the 
insertion of a path, noted $P$, into a labeling, noted $O$, at the index $i$. 
The labeling can be decomposed into three parts: the first part is noted $O_1$ 
and is made of the vertices located before $i$. The vertex right after the index 
of insertion $i$ is noted $k$, while the remaining vertices compose the third 
part called $O_2$. The path $P$ is inserted into the labeling between $O_1$ and 
$k$ when the index of insertion is $i$, and between $k$ and $O_2$ when the index 
of insertion is $i+1$. This is schematically represented in 
Fig.~\ref{fig:incr1}: Line 1 represents the current labeling made of a sequence 
of vertices $O_1$ followed by the vertex $k$ at position $i$ and ended by the 
sequence of vertices $O_2$. $P$ (line 2) is the sequence of vertices that is 
currently inserted at the index $i$ (line 3), i.e. just before vertex $k$. Thus, 
the current labeling begins by the path $O_1$, is followed by $P$, then comes 
the vertex $k$ and the path $O_2$. Line 4 gives the current labeling when $P$ is 
inserted at the position $i+1$, where the vertex $k$ has been shifted from right 
to left. 

\begin{figure}[!h]
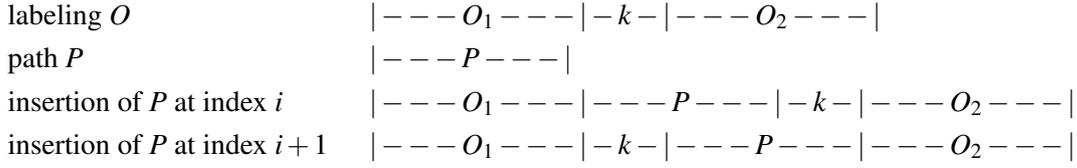

\begin{tabular}{lll}
labeling $O$ & $|---O_1---|-k-|---O_2---|$ \\
path $P$& $|---P---|$ \\
insertion of $P$ at index $i$ & $ |---O_1---|---P---|-k-|---O_2---|$ \\
insertion of $P$ at index $i+1$ & $|---O_1---|-k-|---P---|---O_2---|$
\end{tabular}
\caption{Schema of the insertion of path $P$ in the current 
labeling $O$\label{fig:incr1}}
\end{figure}

As the label of vertices are given by the position of the vertices in the 
labeling, it is clear that from an index to the next one, only the vertices in 
$P$ and $k$ will have different labels. Let $\#P = p$ the number of vertices in 
the path $P$, and $\pi_i[u]$ the label of vertex $u$ when $P$ is inserted at 
index $i$. The changes in the labels for each group of vertices are the 
following:
\begin{align}
\label{eq:pik} \pi_{i+1}[k] &= \pi_{i}[k] - p \\ 
\label{eq:pip} \forall u \in P,\:\pi_{i+1}[u] &= \pi_{i}[u] + 1 \\ 
\label{eq:pio1} \forall u \in O_1,\:\pi_{i+1}[u] &= \pi_{i}[u] \\ 
\label{eq:pio2} \forall u \in O_2,\: \pi_{i+1}[u] &= \pi_{i}[u]
\end{align}
We note $\cbs{}^{(i)}$ the value of the cyclic bandwidth sum when $P$ is 
inserted at index $i$. The computation of $\cbs{}^{(i)}$ can be decomposed 
according to the different groups of vertices defined above:
\begin{align}
\label{eq:cbsi}
\cbs{}^{(i)} &= \cbs{}^{(i)}(O_1, O_1) + \cbs{}^{(i)}(O_2, O_2) + 
\cbs{}^{(i)}(O_1, 
O_2) + \cbs{}^{(i)}(P,P) \\ \nonumber
&+ \cbs{}^{(i)}(k, O_1) + \cbs{}^{(i)}(k, O_2) + \cbs{}^{(i)}(k,P) \\\nonumber
&+ \cbs{}^{(i)}(P, O_1) + \cbs{}^{(i)}(P, O_2)
\end{align}
where $\cbs{}^{(i)}(X,Y)=\sum_{u\in X, v\in Y, \{u,v\} \in E} 
d_C(\pi_i[u],\pi_i[v])$ is the value of \cbs{} when only the edges of the graph 
between the two sets $X$ and $Y$ are considered, with $d_C(\pi[u], \pi[v])$ 
defined in Eq.~\ref{eq:dc}. The definition of the distance $d_C$ shows trivially 
that if the labels of the endpoint vertices of an edge are not shifted or are 
shifted equally, then the value of $d_C$ remains the same:
\begin{align}
 \cbs{}^{(i+1)}(O_1, O_1) &= \cbs{}^{(i)}(O_1, O_1) \\
 \cbs{}^{(i+1)}(O_2, O_2) &= \cbs{}^{(i)}(O_2, O_2) \\
 \cbs{}^{(i+1)}(O_1, O_2) &= \cbs{}^{(i)}(O_1, O_2) \\
 \cbs{}^{(i+1)}(P,P) &= \cbs{}^{(i)}(P,P) 
\end{align}

When the labels of endpoint vertices are not shifted equally, it is necessary to 
consider not only the changes induced by the shift, but also which terms between 
$|\pi[u] - \pi[v]|$ and $n-|\pi[u] - \pi[v]|$ is the minimum, both at index i 
and $i+1$, as it can vary. We prove in the following the results when the 
endpoint vertices are $k$ and a vertex in $O_1$. The other results are given in 
Appendix~\ref{apdx:theorems}.

\begin{theorem}{Edges between $k$ and the vertices of $O_1$}

\label{thm:k01}
Let $u \in O_1$ and $\Delta = \pi_i[k] - \pi_i[u]$. We have:
\begin{enumerate}
\item if $\Delta \leq \frac{n}{2}$ then
  $\cbs{}^{(i+1)}(k,u)=\cbs{}^{(i)}(k,u)-p$.
\item if $\Delta \geq \frac{n}{2} + p$ then
  $\cbs{}^{(i+1)}(k,u)=\cbs{}^{(i)}(k,u)+p$.
\item  if $\frac{n}{2} <  \Delta < \frac{n}{2} + p $  then 
$\cbs{}^{(i+1)}(k,u)=\cbs{}^{(i)}(k,u)+2\Delta - (n+p)$

\end{enumerate}

\end{theorem}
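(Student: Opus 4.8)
The plan is to reduce the statement to a single edge of $G$ and then run a short case analysis on which of the two candidates defining the cyclic distance $d_C$ is the active one, both at index $i$ and at index $i+1$.

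First I would note that, by the definition $\cbs{}^{(i)}(X,Y)=\sum_{u\in X, v\in Y, \{u,v\}\in E} d_C(\pi_i[u],\pi_i[v])$ taken with $X=\{k\}$ and $Y=\{u\}$, the term $\cbs{}^{(i)}(k,u)$ equals $d_C(\pi_i[k],\pi_i[u])$ when $\{k,u\}\in E$ and $0$ otherwise; in the latter case every claimed identity is trivial, so it suffices to treat $\{k,u\}\in E$, after which $\cbs{}^{(i+1)}(k,O_1)$ is recovered by summing over the neighbours $u\in O_1$ of $k$. Then I would fix the signs of the label differences involved: when $P$ is inserted at index $i$ the labeling reads $O_1, P, k, O_2$ (Fig.~\ref{fig:incr1}), so $k$ occupies position $\#O_1+p$ while every $u\in O_1$ occupies a position at most $\#O_1-1$; hence $\Delta=\pi_i[k]-\pi_i[u]\ge p+1>0$, so $|\Delta|=\Delta$. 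By Eqs.~(\ref{eq:pik}) and~(\ref{eq:pio1}) the same difference at index $i+1$ equals $\pi_{i+1}[k]-\pi_{i+1}[u]=\Delta-p\ge 1>0$. Substituting into $d_C$ from Eq.~(\ref{eq:dc}) then gives $\cbs{}^{(i)}(k,u)=\min\{\Delta,\,n-\Delta\}$ and $\cbs{}^{(i+1)}(k,u)=\min\{\Delta-p,\,n-\Delta+p\}$.

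Next I would split into the three regimes. If $\Delta\le\frac{n}{2}$, then $\Delta\le n-\Delta$ so the first minimum is $\Delta$, while $2\Delta\le n\le n+2p$ gives $\Delta-p\le n-\Delta+p$ so the second minimum is $\Delta-p$; subtracting yields $\cbs{}^{(i+1)}(k,u)=\cbs{}^{(i)}(k,u)-p$. If $\Delta\ge\frac{n}{2}+p$, then also $\Delta\ge\frac{n}{2}$, so the first minimum is $n-\Delta$, and $\Delta-p\ge\frac{n}{2}$, so the second minimum is $n-\Delta+p$, giving $\cbs{}^{(i+1)}(k,u)=\cbs{}^{(i)}(k,u)+p$. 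Finally, if $\frac{n}{2}<\Delta<\frac{n}{2}+p$, then $\Delta>\frac{n}{2}$ makes the first minimum $n-\Delta$, while $\Delta-p<\frac{n}{2}$ makes the second minimum $\Delta-p$; the difference $(\Delta-p)-(n-\Delta)=2\Delta-(n+p)$ is the third identity.

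The computation is essentially bookkeeping, so I expect the only delicate points to be: checking that $\Delta$ and $\Delta-p$ are strictly positive, which is what licenses removing the absolute values cleanly and which rests on the structural fact that, in the index-$i$ configuration, $k$ lies to the right of both $O_1$ and $P$; handling the boundary value $\Delta=\frac{n}{2}$ consistently within the first case (it is harmless there); and, most of all, correctly identifying which argument of each $\min$ is active in each regime, since in the middle regime it is precisely that active argument which switches when passing from index $i$ to $i+1$. I do not anticipate any obstacle beyond this.
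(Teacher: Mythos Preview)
Your proof is correct and follows essentially the same route as the paper: establish that $\Delta$ and $\Delta-p$ are positive so the absolute values in $d_C$ can be dropped, then do the three-case analysis on which argument of each $\min$ is active. Your version is slightly more explicit about the positivity bound $\Delta\ge p+1$ and about the degenerate case $\{k,u\}\notin E$, but the argument is the same.
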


\begin{proof}
For all $u \in O_1$, we have $\pi_{i+1}[u] = \pi_{i}[u] < \pi_{i+1}[k] < 
\pi_{i}[k] $ from Eqs~\ref{eq:pik} and \ref{eq:pio1}. Thus $0 < \pi_{i+1}[k] - 
\pi_{i+1}[u] < \Delta$, allowing for the removal of the absolute value in 
Eq.~\ref{eq:dc}.

Let us consider the case where the minimum term retained to compute 
$\cbs{}_i(u,k)$ in Eq.~(\ref{eq:dc}) is the first term. It means that:

\begin{align}
 \Delta &\leq n - \Delta \Leftrightarrow  \Delta \leq \frac{n}{2}
\end{align}

When $\cbs{}_{i+1}(u,k)$ is considered, the first term is retained in 
Eq.~(\ref{eq:dc}) if:

\begin{align}
 \pi_{i+1}[k] - \pi_{i+1}[u] &\leq n - (\pi_{i+1}[k] - \pi_{i+1}[u]) \\ 
\nonumber
 \Delta - p  &\leq n - (\Delta - p) \\ \nonumber
 2(\Delta - p) &\leq n \\ \nonumber
 \Delta &\leq \frac{n}{2} + p
\end{align}

Symmetrically the second term in the minimum function in Eq.~(\ref{eq:dc}) is 
used for $\cbs{}^{(i)}(u,k)$ if $\Delta \geq \frac{n}{2}$ and for 
$\cbs{}^{(i+1)}(u,k)$ if $\Delta \geq \frac{n}{2} + p$. 

Then, using Eq.~\ref{eq:pik} and Eq.~\ref{eq:pio1}, there are 3 possible cases :

\begin{enumerate}
\item  If $\Delta \leq \frac{n}{2}$, then the first term is retained for 
$\cbs{}^{(i)}(u,k)$ and $\cbs{}^{(i+1)}(u,k)$:
  \begin{align}
\cbs{}^{(i+1)}(k,u) - \cbs{}^{(i)}(k,u) &= (\pi_{i+1}[k] - \pi_{i+1}[u]) -    
(\pi_i[k] - \pi_{i}[u]) \\ \nonumber
&= (\pi_i[k] - p - \pi_i[u]) - (\pi_{i}[k] - \pi_{i}[u]) \\ \nonumber
&= - p
  \end{align}
  
\item If $\Delta \geq \frac{n}{2} + p$, then the second term is retained for 
$\cbs{}^{(i)}(u,k)$ and $\cbs{}^{(i+1)}(u,k)$:

\begin{align}
\cbs{}^{(i+1)}(k,u) - \cbs{}^{(i)}(k,u) &= (n - (\pi_{i+1}[k] - \pi_{i+1}[u]))
- (n - (\pi_i[k] - \pi_{i}[u]) \\ \nonumber
&= -(\pi_i[k] - p - \pi_i[u]) + (\pi_{i}[k] - \pi_{i}[u]) \\ \nonumber
&= p
\end{align}

\item $\frac{n}{2} < \Delta < \frac{n}{2} + p$, then the second term is 
retained for $\cbs{}^{(i)}(u,k)$ and the first term for $\cbs{}^{(i+1)}(u,k)$:

\begin{align}
\cbs{}^{(i+1)}(k,u) - \cbs{}^{(i)}(k,u) &= (\pi_{i+1}[k] - \pi_{i+1}[u]) - (n - 
(\pi_{i}[k] - \pi_{i}[u])) \\ \nonumber
&= (\pi_i[k] - p - \pi_i[u]) - n + (\pi_{i}[k] - \pi_{i}[u]) \\ \nonumber
&= 2\Delta - (n+p)
\end{align}

\end{enumerate}

\end{proof}

\subsection{Comments}

\subsubsection{Influence of the initialization}

The algorithm is completely deterministic and several executions will lead to 
the same solution with a similar input. The algorithm can nevertheless return 
different solutions for a same graph by changing the initial identifiers of the 
vertices:  three steps of the heuristic produce a stochastic behavior and all of 
them originate from the same statement: When a sort is performed, whatever the 
criterion of sorting, if several elements have the same value, then the first 
element encountered by the algorithm is selected before the other ones. This 
happens when (1) the vertices are sorting according to their degree to select 
the first vertex of a path, (2) when several paths have the same length and (3) 
when the path insertion at several positions leads to the same CBS value. The 
stochasticity induced by the initial vertex order is studied in 
Section~\ref{sec:results}, by randomly ordering the vertices $k$ times and 
selecting the minimal value of \cbs{} over the $k$ repetitions, for different 
values of $k$. It shows that when the number of repetitions is high, the 
solution obtained is little bit better. However, the improvement of the 
performance is not really high and indicates the good robustness of the 
heuristic, even with a moderate number of repetitions.

\subsubsection{Local search against global search}

A drawback of the heuristic is that it relies on local searches in the graph. 
Therefore, the algorithm cannot go to a vertex which is not a neighbor of the 
previous one. The labeling is hence really tailored to the structure of the 
graph, as required. Nevertheless, the optimal labeling is sometimes either not 
consistent with the topology of the network for instance if high jumps should 
appear, or it is consistent but using a different organization, not reachable by 
the heuristic. As our main motivation is to follow closely the network 
structure, the obtained labeling can lead to bad results in terms of optimal 
\cbs{}, while finding a path will follow the network. 
\makeatletter{}\section{Detailed algorithm}
\label{sec:algorithm}

The whole algorithm \algo{} comprises the consecutive execution of two steps, 
introduced in Section~\ref{sec:proposed_method}. For readability, the 
algorithm of each step is described separately, respectively in 
Algorithms~\ref{algo:step1} and \ref{algo:step2}. 

From a connected, unweighted and undirected graph $G=(V,E)$ with $n$ vertices,  
the algorithm outputs a one-to-one mapping $\pi$ from $V$ to $\{0, \cdots, 
n-1\}$. We consider in the following a \texttt{List} as a list of elements with 
the associated functions \texttt{List-Insert}($A,a, idx$) which inserts the 
element $a$ in the list $A$ at the index $idx$ (if $idx$ is not given, the 
element $a$ is inserted at the end of the list), \texttt{List-Remove}($A,a$) 
which removes the element $a$ from the list $A$, \texttt{Length}($A$) which 
returns the number of elements of the list $A$, and the function 
\texttt{Reverse}($A$), which returns the list $A$ in the reverse order. The 
function \texttt{Degree}($u$) returns the degree of the vertex $u$ in the graph 
$G$, i.e. the number of vertices adjacent to the vertex $u$. Finally 
\texttt{Adj}($u$) returns the adjacent vertices of $u$.

\begin{algorithm}[!ht]
  \caption{Step 1: Guiding the search towards locally similar vertices}
  \label{algo:step1}
  \begin{algorithmic}[1]
  	\REQUIRE $G = (V, E)$
  	\ENSURE $Paths$
	\STATE $S = \texttt{List}(V)$
	\STATE $Paths \leftarrow \texttt{List}()$
    \WHILE{$S$ is not empty}
    \STATE $u_0 \leftarrow  \arg\min_{u\in S}$\texttt{Degree}($u$)
    \STATE $\texttt{List-Remove}(S, u_0)$
    \STATE $exist\_successors \leftarrow$ \texttt{True}
    \STATE $ P\leftarrow \texttt{List}()$
    \WHILE{$exist\_successors$}
    \STATE \texttt{List-Insert}($P, u_0$)
    \STATE $H\leftarrow \texttt{List}()$
    \FORALL{$v \in \texttt{Adj}(u_0) \cap S$}
    \IF{\texttt{Degree}($v$) = 1}
    \STATE \texttt{List-Insert}($P, v$)
    \STATE $\texttt{List-Remove}(S, v)$
    \ELSE
    \STATE \texttt{List-Insert}($H, v$)
    \ENDIF
    \ENDFOR
    \IF{$H$ is not empty}
    \STATE $u_0 \leftarrow \arg\max_{w\in H}$\texttt{Similarity\_Index}($u,w$)	
    \ELSE
    \STATE $exist\_successors \leftarrow$ \texttt{False}
    \ENDIF
    \ENDWHILE
    \STATE \texttt{List-Insert}($Paths, P$)
    \ENDWHILE
  \end{algorithmic}
\end{algorithm}

Algorithm~\ref{algo:step1} computes the first step of the heuristic as described 
in Section~\ref{subsec:step1}. It consists in building a collection of paths 
containing the vertices of the graph, each  path traversing the graph following 
its structure. Line 1 initializes a list $S$ containing all the vertices of the 
graph, while Line 2 initializes an empty list which will contain the paths . The 
search of paths (Lines 3 to 26) is then performed until all vertices are 
included in a path. A vertex of $S$ minimum degree value is considered (Line 4). 
The selected vertex, noted $u_0$, is removed from $S$ (Line 5) and is the 
starting vertex of the search from Line 8 to Line 24. The path is defined as a 
sequence of vertex added in a list $P$ (Line 7), and is closed when there are no 
more successor available to extend the path or when the depth-first search ends. 
The first step of this loop consists of adding the vertex $u_0$ to the path $P$ 
(Line 9). A new list $H$ is then initialized (Line 10) and will contain the 
potential successors of $u_0$. These successors are selected among the adjacent 
vertices of $u_0$ which are still in the list $S$, i.e. which have not been 
included in a path beforehand (Line 11). For each of the successors, noted $v$, 
if the degree of $v$ is equal to $1$, i.e. the vertex $v$ has only the vertex 
$u_0$ as adjacent vertex, then $v$ is directly added in the path (Lines 12 to 
14). Otherwise, it is added to the list $H$ (Line 16). When all the potential 
successors have been either added to $P$ or $H$, the next vertex to be 
considered is chosen among the elements of $H$, as the one with the highest 
similarity with the current $u_0$ according to Eq~\ref{eq:jaccard} and given by 
the function $\texttt{Similarity\_Index}$. The heuristic then loops using the 
updated value of $u_0$. If $H$ is empty, \texttt{exist\_successors} is set to 
\texttt{False} (Line 22)  and the path is inserted in the list of paths (Line 
25). If $S$ is not empty, then $u_0$ is updated using the procedure described in 
Line 4 and the search of a path from this vertex is repeated. When $S$ is empty, 
the first step is completed.

\begin{algorithm}[!ht]
  \caption{Step 2: Greedy merge of paths}
  \label{algo:step2}
  \begin{algorithmic}[1]
	\REQUIRE $Paths$
    \STATE $Order \leftarrow \arg\max_{P\in Paths}\texttt{Length}(P)$
    \STATE \texttt{List-Remove}($Paths$, $Order$)
    \WHILE{$Paths$ is not empty}
    \STATE $P_0 \leftarrow \arg\max_{P\in Paths} \texttt{Length}(P)$
    \STATE $idx, reverse \leftarrow \texttt{Incremental\_CBS}(Order, P_0)$
    \IF{$reverse$ is \texttt{True}}
	\STATE $\texttt{Insert-List}(Order, \texttt{Reverse}(P_0), idx)$
	\ELSE
	\STATE $\texttt{Insert-List}(Order, P_0, idx)$
    \ENDIF
    \STATE $\texttt{List-Remove}(Paths, P_0)$
    \ENDWHILE
    \STATE
    \STATE $i \leftarrow 0$
    \FOR{$i=0:(n-1)$}
    \STATE $\pi[Order[i]]\leftarrow i$
    \ENDFOR
    \RETURN $\pi$
  \end{algorithmic}
\end{algorithm}

Algorithm~\ref{algo:step2} computes the second step of the heuristic as 
described in Section~\ref{subsec:step2}. A list $Order$ is first initialized as 
the path in $Paths$ with the highest number of elements (Line 1). This path is 
then removed from the list $Paths$, and the algorithm inserts all the remaining 
path in the list $Order$ using a loop from Line 3 to Line 12. The path with the 
highest number of elements is selected (Line 4). The function 
\texttt{Incremental\_CBS} returns the index and the direction of insertion of 
$P$ that minimizes the \cbs{}. Depending on the value of the boolean variable 
$reverse$, the path $P_0$ is inserted reversed (Line 7) or not (Line 9). The 
path is then removed from the list of paths $Paths$ and the heuristic loops 
until all the paths have been inserted in $Order$. As the result, the mapping 
$\pi$ is built using the vertices as keys and the index of the vertices in the 
list $Order$ as values.
 
\makeatletter{}\section{Worst-case complexity of the algorithm}
\label{sec:complexity}

We now examine the worst-case time complexity of the algorithm \algo{} described 
in the previous section, when applied on a graph $G(V,E)$ with $\# V=n$ and $\# 
E=m$. 

We first examine the complexity of Algorithm~\ref{algo:step1}. The set $S$ 
initialized Line 1 can be implemented as a min-priority queue with a binary 
min-heap. The time to build the binary min-heap is $O(n)$. Lines 4 and 5 can be 
done using the \textsc{Extract-Min} function that takes time $O(\log n)$. 
Similarly, the set $Paths$ can be implemented as a max-priority queue with a 
binary max-heap and Line 13 takes in the worst case a time proportional to the 
logarithm of the number of paths, that is in the worst case $O(\log n)$. Using 
aggregate analysis, the \texttt{while} loop in Line 8 is executed at most once 
for each vertex of $V$, since the vertex $u_0$ is removed from $S$ (Line 5). The 
function \texttt{List-Insert} is in constant time. The set $H$ of vertices that 
are adjacent to $u$ and in $S$ is implemented as a max-priority queue using a 
binary heap data structure that makes possible to run 
\textsc{Max\_Heap\_Insert}, that inserts a new element into $H$ (Line 16) while 
maintaining the heap property of $H$ in $O(\log (\#H))$, that is in the worst 
case in $O(\log (\#\texttt{Adj}[u]))$. Thus, the loop on Lines 8-18 in is 
executed $\#\texttt{Adj}[u]$ times and at each iteration (1) the similarity 
index computation takes time $\Theta(\min(\#\texttt{Adj}[u],\#\texttt{Adj}[v]))$ 
and (2) \textsc{Max\_Heap\_Insert} takes time $O(\log(\#\texttt{Adj}[u]))$. 
Therefore, loop Lines 8-18 is in $O(\#\texttt{Adj}[u]^2)$. Line 20 can be done 
using \textsc{Extract\_Max} in time $O(\log(\#\texttt{Adj}[u]))$ and the total 
complexity of Lines 8-24 is in $O(\#\texttt{Adj}[u]^2)$. Consequently, the total 
time is $O(\sum_{u\in V} (\#\texttt{Adj}[u]^2))$. As K. Das established in 
\cite{Das2003} that 
\begin{align}
	\sum_{u\in V} 
\#\texttt{Adj}[u]^2\leq m\left(\frac{2m}{n-1}+n-2 \right)
\end{align}
we can conclude that the total cost of Algorithm~\ref{algo:step1} is in 
$O(n\log n+mn)=O(mn)$

We can also use an aggregate analysis to evaluate the time taken by the 
Algorithm~\ref{algo:step2}. Lines 2 and 5 are in $O(n)$, when almost all the 
vertices have already been merged in \texttt{Order}. \texttt{Incremental\_CBS} 
runs through (1) all the edges between the vertices of the current path $P$ and 
the ones of \texttt{Order} and (2) between the vertex of \texttt{Order} at 
position \texttt{Position} and the other vertices of \texttt{Order}$\cup P$.
\begin{itemize} 
	\item Step (1) takes $O(mn)$ since all the edges of the graph are examined 
when aggregating the analysis over the all paths: the adjacency list of each 
vertex is examined once. Furthermore, for each of these $m$ edges, the $n$ 
positions of \texttt{Order} are evaluated.;
	\item Step (2) is also in $O(mn)$ since aggregating the adjacency lists of 
the vertices of \texttt{Order} leads to the $m$ edges of the graph that are 
evaluated for each of the at most $n$ paths. \end{itemize}
The other 
instructions of the loop are executed in constant time. Therefore, the total 
time spent in Algorithm~\ref{algo:step2} is $O(mn)$. 

Finally, we have that the whole algorithm has a worst case complexity in 
$O(mn)$.
 
\makeatletter{}\section{Computational experiments}
\label{sec:results}

This section describes the computational experiments that we carried out to 
assess the performance of the heuristic \algo{} discussed in the previous 
sections. The aim of this part is to test the ability of the algorithm to obtain 
a good approximate solution for the \cbsp{}, in a reasonable amount of time. 

\subsection{Experimental setup}
\label{subsec:setup}

The assessment of the heuristic is performed on three aspects:

\begin{description}
	\item[\textbf{Performance}] The value of \cbs{} obtained using \algo{} is 
compared with a reference value, chosen as the theoretical results when 
available, or as the value of \cbs{} achieved using an existing method. For 
each 
instance of graph, random identifiers are assigned to the vertices, and the 
heuristic \algo{} is executed, returning the labeling and the value of \cbs{} 
achieved at the end. $30$ repetitions of this process are performed to obtain 
$30$ values of $\cbs{}$ for each instance, in order to check the robustness of 
the results;
	\item[\textbf{Robustness}] The value of \cbs{} obtained using \algo{} is 
compared for different numbers of repetitions of the heuristic, to assess the 
stochasticity of the heuristic and its influence on the results. The process 
described above is repeated $k$ times, and the minimal value over these $k$ 
repetitions is retained. The algorithm is tested for $k \in 
\{10, 20, 50\}$:  For each value of $k$, $30$ repetitions are performed 
as above, to obtain $30$ values of \cbs{};
	\item[\textbf{Execution time}] The average execution time of the algorithm 
is observed to assess the speed of the heuristic. For each instance, the time 
in seconds for the $30$ repetitions of \algo{} is used to obtain an 
average time execution for one repetition.
\end{description}

A comparison is performed between the median value of \cbs{} over the $30$ 
repetitions, noted \emph{median \cbs{}}, with a reference value, noted 
\emph{ref}, which depends on the type of graphs, by computing the relative 
distance \emph{rd}:

\begin{align} 
\text{\emph{rd}} = \frac{\text{\emph{median \cbs{}}} - 
\text{\emph{ref}}}{\text{\emph{ref}}} 
\end{align} 

The sign of \emph{rd} indicates if \emph{median \cbs{}} is greater ($\emph{rd} > 
0$) or lower ($\emph{rd} < 0$) than the reference value, while its value gives 
how far \emph{median \cbs{}} is far from \emph{ref}. For example, $rd=0.80$ 
indicates that \emph{median \cbs{}} is $1.80$ times higher than the reference 
value while $rd=-0.25$ means that the median of the value of \cbs{} is $1.25$ 
times lower than the reference value. Raw data of \emph{median \cbs{}} and 
\emph{ref} are available in Appendix~\ref{sec:detailed_results}.

\algo{} algorithm is implemented in Python 2.7 using the module \emph{Networkx} 
\cite{Hagberg2008}. All the experiments were conducted on a 2.60 GHz 
Intel~Core~i7  with 8 GB of RAM.

\subsection{Datasets}

Eight types of graphs have been considered for the experiments: a detailed 
description of each data set follows.

\subsubsection{Graphs with known \cbs{} optimal value}

\paragraph{\textbf{Path graphs}}
A path graph is defined as a sequence of vertices such that each vertex, except 
the first and the last ones, are linked with its previous and next vertices in 
the sequence. The optimal value for a path $P_n$ of size $n$ is 
$\cbs{}_{\text{opt}}(P_n) = n-1$. The data set consists of all the paths up to 
448 vertices.

\paragraph{\textbf{Cycle graphs}}
A cycle graph is a path whose first and last vertices are linked, 
forming a circular sequence of vertices. The optimal value for a cycle $C_n$ of 
size $n$ is $\cbs{}_{\text{opt}}(C_n) = n$. The data set consists of all the 
cycles up to 448 vertices.

\paragraph{\textbf{Wheel graphs}}
A wheel graph is defined as a cycle whose all vertices are also linked to a 
single 
vertex called \emph{hub}. The optimal value of $\cbs{}$ for a wheel $W_n$ with 
$n$ vertices is  $\cbs{}_{\text{opt}}(W_n) = n + \lfloor \frac{1}{4} n^2 
\rfloor$ as proved in \cite{Jianxiu2001}. The data set consists of all the 
cycles up to 448 vertices.

\paragraph{\textbf{Power graphs of cycles (PGC)}}
The $k$th power of a cycle graph $C_n$ is a graph with $n$ vertices and edges 
such that $u$ and $v$ are linked if and only if the length of the shortest path 
between $u$ and $v$ in $C_n$ is equal or lower than $k$. The optimal value for 
the $k$th power of cycle graph $C_n^k$ is $\cbs{}_{\text{opt}}(C_n^k) = 
\frac{1}{2} nk(k+1)$ as proved in \cite{Jianxiu2001}. The data set consists of 
all the $k$th power of cycles up to 448 vertices, with $k \in \{2, 10\}$.

\paragraph{\textbf{Complete bipartite graphs (CBG)}}
A complete bipartite graph is  composed of two sets with respectively $n_1$ and 
$n_2$ vertices: each vertex of the first set is linked with all the vertices of 
the second set and there is no link between two vertices of 
the same set. Chen et al. \cite{Chen2007} proved that the optimal value of 
\cbs{} for a complete bipartite graph $K_{n_1n_2}$ is given by:
\begin{align}
\cbs{}_{\text{opt}}(K_{n_1n_2}) = 
	  			\left\{
	  			\begin{array}{l l}
	    				\frac{n_1n_2^2 + n_1^2n_2}{4} & \quad \text{if $n_1$ 
and $n_2$ are even}\\
	    				\rule[-1.5ex]{0pt}{5ex}
	    				\frac{n_1n_2^2 + n_1^2n_2 + n_1}{4} & \quad \text{if 
$n_1$ is even and $n_2$ is odd}\\    
	    				\rule[-1.5ex]{0pt}{5ex}
	    				\frac{n_1n_2^2 + n_1^2n_2 + n_1 + n_2}{4} & \quad 
\text{if $n_1$ and $n_2$ are odd}\\
	    				\rule[-1.5ex]{0pt}{5ex}
  	   				\frac{n_1n_2^2 + n_1^2n_2 + n_2}{4} & \quad \text{if $n_1$ 
is odd and $n_2$ is even}\\
	  			\end{array} \right. \nonumber
\end{align}
The data set consists of all the complete bipartite graphs up to 448 vertices 
with three different ratios between $n_1$ and $n_2$: $1$, $3$ and $7$. Only the 
values of $n_1$ and $n_2$ eligible for the desired ratios have been retained.

\subsubsection{Graphs with upper bound of the \cbs{} optimal value}

\paragraph{\textbf{Cartesian products}}
The Cartesian product of two graphs $G = (V_G, E_G)$, with $\#V_G = m$, and $H = 
(V_H, E_H)$, with $\#V_H = n$, noted $G \times H$, is the graph with vertex set 
$V_G \times V_H = \{(u, v) | u \in V_G, v \in V_H\}$. The vertices $(u_G, u_H)$ 
and $(v_G, v_H)$ are adjacent if and only if $u_G = v_G$ and $(u_H, v_H) \in 
E_H$ or $u_H = v_H$ and $(u_G, v_G) \in E_G$.

Jianxiu \cite{Jianxiu2001} proved upper bounds for the optimal value of \cbs{} 
when $G$ and $H$ are either a path, a cycle or a complete graph. A complete 
graph with $n$ vertices, noted $K_n$, is the graph where all pairs of 
vertices are linked. Using the notations given above, we have:
\begin{align}
\cbs{}_{\text{opt}}(P_m \times P_n) &\leq m(n-1) + n^2(m-1), \quad{} m\geq n \\
\cbs{}_{\text{opt}}(C_m \times C_n)&\leq m(n^2+2n-2), \quad{} m\geq n\geq 3 \\
\cbs{}_{\text{opt}}(K_m \times K_n)&\leq \frac{1}{6}mn\left(n^2 + 3n 
\left\lfloor 
\frac{m}{2} \right\rfloor \left\lceil \frac{m}{2} \right\rceil - 1 \right), 
\quad m\geq n \\
\cbs{}_{\text{opt}}(P_m \times C_n)&\leq n(m^2 + m - 1) \\
\cbs{}_{\text{opt}}(P_m \times K_n)&\leq \frac{1}{2}m^2n \left\lfloor 
\frac{n}{2} 
\right \rfloor \left\lceil \frac{n}{2} \right\rceil + n(m-1) \\
\cbs{}_{\text{opt}}(C_m \times K_n)&\leq n \left( 
\frac{1}{2}m^2\left\lfloor\frac{n}{2}\right\rfloor 
\left\lceil\frac{n}{2}\right\rceil + 2m-2 \right)
\end{align}

The data set consists of the Cartesian products of graphs cited above, with $m$ 
and $n$ up to $25$, with the specific constraints on $m$ and $n$ if necessary.

\subsubsection{Graphs with unknown \cbs{} optimal value}

\paragraph{\textbf{Random connected graphs}}
A random graph \cite{Durrett2007} is a graph where the edges between the 
vertices are randomly drawn. The Erd\"os-R\'enyi model has been used to generate 
random graphs: for each pair of vertices, an edge between the two vertices has a 
probability $p$ to appear. The data sets consists of $50$ random graphs built 
as follows: for each value of $p \in \{0.1, 0.3, 0.5, 0.7, 0.9\}$, $10$ 
instances of the Erd\"os-R\'enyi model are generated with a fixed number of 
vertices set to $100$.

\paragraph{\textbf{Harwell-Boeing collection}}
The Harwell-Boeing Sparse Matrix Collection Graphs \cite{Duff1992} consists of 
a set of standard matrices arising from various problems in engineering and 
scientific fields. Graphs are derived from these matrices as follows: Let 
$M_{ij}$ be the element of the $i$th row and the $j$th column of a sparse matrix 
$M$ of size $n \times n$, the resulting graph has $n$ vertices such that there 
is an edge between vertices $i$ and $j$ if and only if $M_{ij} \neq 0$ and 
$i\neq j$. We selected $27$ matrices from this collection, from small graphs (24 
vertices) to large graphs (1454 vertices), representing a wide variety of 
structures. Table~\ref{tab:harwell_perf} gives the list of matrices used with 
the 
number of vertices and the number of edges of the resulting graphs.

\subsection{Performances of the heuristic \algo{}}

\subsubsection{Comparison with known \cbs{} optimal value}

The heuristic \algo{} achieves the optimal value of \cbs{} given by the 
theoretical results for all instances from the following datasets: paths, 
cycles, wheels, power graphs of cycles and complete bipartite graphs.

\subsubsection{Comparison with the upper bound of the \cbs{} optimal value}

The values of \cbs{} obtained using the heuristic \algo{} have been compared 
with the theoretical upper bound given by Jianxiu \cite{Jianxiu2001}. 
Table~\ref{tab:perf_cart_sum} shows a summary of the results: each line concerns 
one type of Cartesian products. The first two columns give the two graphs used 
in the Cartesian products and the third column the number of graph in the 
collection, corresponding for all couple of values $m$ and $n$. The fourth 
column gives the relative distance averaged over all graphs of the collection. 
Finally, the fifth column refers to the table of detailed results in the 
Appendix. 

\begin{table}[!ht]
	\centering
	\small
	\begin{tabular}{|llr|r|l|} 
	\hline
	\multicolumn{1}{|c}{$G$} & 
	\multicolumn{1}{c}{$H$} & 
	\multicolumn{1}{c|}{\emph{\# graphs}} & 
	\multicolumn{1}{c|}{\emph{mean rd}} & 
	\multicolumn{1}{c|}{\emph{Table}} \\ \hline
	
	\makeatletter{}Path & Complete & 400 & -0.84 &\ref{tab:cart_pk_perf} \\ 
Cycle & Complete & 400 & -0.83 &\ref{tab:cart_ck_perf} \\
Complete & Complete & 210 & -0.20 &\ref{tab:cart_kk_perf} \\
Path & Cycle & 400 & 0.50 &\ref{tab:cart_pc_perf} \\ 
Path & Path & 210 & 0.76 &\ref{tab:cart_pp_perf} \\
Cycle & Cycle & 210 & 0.83 &\ref{tab:cart_cc_perf} \\

	\hline
	\end{tabular}
	\caption{\label{tab:perf_cart_sum}Summary of the performance of the 
heuristic \algo{} on the Cartesian products of paths, cycles and complete 
graphs. The first two columns give the two graphs used in the Cartesian 
products 
and the third column the number of graph in the collection. The fourth column 
gives the averaged relative distance. The fifth column refers to the table of 
detailed results in the Appendix.}

\end{table}

The results highlight the different behaviors of the heuristic \algo{} according 
to the topology. When the graph is organized as a succession of linear 
well-structured subgraphs, the heuristic \algo{}  is highly efficient: in the 
case of the Cartesian product of a complete graph and a path or a cycle graph, 
the graph is a succession of cliques in a linear or cyclic arrangement. The 
heuristic \algo{} is then guided by this organization to discover the underlying 
structure, and the value of \cbs{} is consequently very low compared to the 
upper bound. However, when the structure presents regularities but not along a 
linear arrangements, the heuristic \algo{} tends to fail to find the optimal 
value. This happens for the grid (Cartesian products of paths), the torus 
(Cartesian products of cycles) and the cylinder (Cartesian product of a path and 
a cycle): the heuristic \algo{} has many consistent ways to traverse the graph. 
The discovery of the structure is nevertheless performed, but without minimizing 
as much as possible the value of \cbs{}. Detailed results (Tables 
\ref{tab:cart_kk_perf} \ref{tab:cart_pc_perf} \ref{tab:cart_pp_perf} and 
\ref{tab:cart_cc_perf}) show that when the graph is a little unbalanced, for 
example when the number of vertices in $G$ and $H$ is different, the value of 
\cbs{} achieved is better, as the heuristic \algo{} follows this imbalance.

\subsubsection{Comparison with the heuristic \gvns{}}

A comparison with the algorithm developed by Satsangi et al. \cite{Satsangi2012} 
has been performed for the graphs where theoretical results for the optimal 
value of \cbs{} do not exist: the dataset of random graphs and the graphs from 
the Harwell-Boeing collection. To assess the performance of the heuristic 
\algo{}, the heuristic \gvns{} \cite{Satsangi2012} has been executed using the 
code provided by the authors. It has been run using Matlab~R2012b. The algorithm 
has been initially developed to start using the initial identifiers of vertices 
as set at the generation of the graph. These identifiers, especially for the 
graph from the Harwell-Boeing collection, have a topological meaning since these 
matrices describe engineering problems: the value of \cbs{} is then naturally 
low using these identifiers. To check the algorithm in blind conditions, we 
added a randomization of the identifiers of the vertices before applying 
\gvns{}. For each instance, the minimum value of \cbs{} over the $50$ 
repetitions, as defined in \cite{Satsangi2012}, has been retained. Detailed 
results in Appendix show the value of \cbs{} obtained using \gvns{} without and 
with this randomization, respectively noted \emph{\cbs{} w/o r} and \emph{\cbs{} 
w/ r}. We can note that the results without randomization \emph{\cbs{} w/o r} 
are consistent with results presented in the paper \cite{Satsangi2012}. The 
reference value used to compute the relative distance \emph{rd} is the value of 
\cbs{} achieved with randomization (\emph{\cbs{} w/ r}), as we work without any 
prior information on the topology of the graph. Table~\ref{tab:harwell_perf_sum} 
gives a summary of the results for the graphs from the Harwell-Boeing collection 
and Table~\ref{tab:random_perf_sum} for the dataset of random graphs. The two 
first columns give the name of the graph and the number of graphs in the 
collection, while the third column provides the value of \emph{mean rd} averaged 
over all the graphs.

\begin{table}[!ht]
\centering
\small
\subfloat[\label{tab:harwell_perf_sum}Harwell-Boeing collection]{
	
	\begin{tabular}{|lr|r|}
		\hline
		\multicolumn{1}{|c}{\emph{Name}} & 
		\multicolumn{1}{c|}{\emph{\# graphs}} & 
		\multicolumn{1}{c|}{\emph{mean rd}} \\ \hline

		\makeatletter{}bcspwr & 6 & -0.76  \\ 
dwt & 9 & -0.74 \\ 
can & 12 & -0.55  \\ 
 
		\hline

	\end{tabular}
}
\hspace{2cm}
\subfloat[\label{tab:random_perf_sum}Random graphs]{

	\begin{tabular}{|lr|r|}
		\hline
		\multicolumn{1}{|c}{\emph{p}} & 
		\multicolumn{1}{c|}{\emph{\# graphs}} & 
		\multicolumn{1}{c|}{\emph{mean rd}} \\ \hline
	
	\makeatletter{}0.1 & 10 & 0.23 \\ 
0.3 & 10 & 0.10 \\ 
0.5 & 10 & 0.08 \\ 
0.7 & 10 & 0.05 \\ 
0.9 & 10 & 0.02 \\ 
 
	\hline
	\end{tabular}
}
	\caption{Summary of the performance of the heuristic \algo{} on the graphs 
from the Harwell-Boeing collection and on random graphs. The first two columns 
indicate the name of the collection and the number of graphs inside. The fourth 
column gives the average relative distance averaged. Detailed results are given 
in Tables~\ref{tab:harwell_perf} and \ref{tab:random_perf}.}
\end{table}

The results of the heuristic \algo{} on the graphs from the Harwell-Boeing are 
better than those obtained using the heuristic \gvns{}. These graphs are indeed 
highly structured, and the heuristic \algo{} is specially designed to traverse 
this type of graphs. Conversely, the results on random graphs are on average 
less positive compared with those obtained using the heuristic \algo{}, 
especially when the density decreases. By definition, random graphs do not 
exhibit well-structured topology: the heuristic \algo{} has then no support, 
explaining its reduced performance.

\subsection{Robustness of the heuristic \algo{}}
\label{subsec:robustness}

The influence of the stochasticity on the results achieved using the heuristic  
\algo{} is studied, by determining how better the results are when the heuristic 
\algo{} is repeated several times. In this experiment, the median value 
\emph{median 
\cbs} is compared to the minimal value obtained over all the repetitions, set 
as the reference value. Tables~\ref{tab:harwell_sto_sum} and 
\ref{tab:random_sto_sum} show the results of the tests to assess the robustness 
of the heuristic \algo{}. For each table, the first two columns give the name 
and the number of graphs in the collection, while the next three columns give 
the 
relative distance averaged over all graphs for the different values of $k$.

\begin{table}[!ht]
	\centering
	\small
	\subfloat[\label{tab:harwell_sto_sum}Harwell-Boeing collection]{
		\begin{tabular}{|lr|rrr|}
			\hline
			\multicolumn{2}{|c|}{\textbf{Collection}} & 
			\multicolumn{3}{c|}{\textbf{Relative distance}} \\
			\hline
			\multicolumn{1}{|c}{\emph{Name}} & 
			\multicolumn{1}{c|}{\emph{\# graphs}} & 
			\multicolumn{1}{c}{$k=10$} & 
			\multicolumn{1}{c}{$k=20$} & 
			\multicolumn{1}{c|}{$k=50$} \\ \hline

			\makeatletter{}bcspwr & 5 & 0.09 & 0.07 & 0.06 \\ dwt & 6 & 0.10 & 0.08 & 0.05 \\ can & 9 & 0.10 & 0.08 & 0.06 \\  
			\hline

	\end{tabular}
	}
	\subfloat[\label{tab:random_sto_sum}Random graphs]{	
	\begin{tabular}{|lr|rrr|}
		\hline
		\multicolumn{2}{|c|}{\textbf{Collection}} & 
		\multicolumn{3}{c|}{\textbf{Relative distance}} \\
		\hline
		\multicolumn{1}{|c}{$p$} & 
		\multicolumn{1}{c|}{\emph{\# graphs}} & 
		\multicolumn{1}{c}{$k=10$} & 
		\multicolumn{1}{c}{$k=20$} & 
		\multicolumn{1}{c|}{$k=50$} \\ \hline

		\makeatletter{}0.1 & 10 & 0.00 & 0.00 & 0.00 \\ 
0.3 & 10 & 0.00 & 0.00 & 0.00 \\ 
0.5 & 10 & 0.00 & 0.00 & 0.00 \\ 
0.7 & 10 & 0.00 & 0.00 & 0.00 \\ 
0.9 & 10 & 0.00 & 0.00 & 0.00 \\ 
 
		\hline
	\end{tabular}
	}
	\caption{Summary of the study of the robustness of the heuristic \algo{} on 
the graphs from the Harwell-Boeing collection and on random graphs. The first 
two columns indicate the name of the collection and the number of graphs in the 
collection. The next three columns report the relative distance between the 
median value of the minimal value of \cbs{} over $k$ repetitions and the minimal 
value obtained over all repetitions, for three values of $k$. Detailed 
results are given in Tables~\ref{tab:harwell_sto} and \ref{tab:random_sto}.}
\end{table}

The results show that in both data sets, the heuristic \algo{} is robust since 
the increase of the number of repetitions does not highly improve the obtained 
results: for the graph from the Harwell-Boeing collection, the advantage is 
slight, while there is no improvement in the case of random graphs. These two 
examples show that even if the heuristic \algo{} has a stochastic behavior, it 
has only a slight influence on the results.

\subsection{Execution time}

The speed of the heuristic \algo{} is assessed by looking at the average 
execution time for different instances of graphs. Table~\ref{tab:time_standard} 
shows the average execution time in seconds of the heuristic \algo{} for 
different values of $n$ on different types of graphs.
\begin{table}[!ht]
	\centering
	\small
	\begin{tabular}{|r|rrrrrrrr|}
		\hline
		\multicolumn{1}{|c|}{$|V|$} &
		\multicolumn{1}{c}{Path} & 
		\multicolumn{1}{c}{Cycle} & 
		\multicolumn{1}{c}{Wheel} & 
		\multicolumn{1}{c}{PGC 2} & 
		\multicolumn{1}{c}{PGC 10} & 
		\multicolumn{1}{c}{CBG 1} & 
		\multicolumn{1}{c}{CBG 3} & 
		\multicolumn{1}{c|}{CBG 7} \\ \hline

		\makeatletter{}8 & $<0.01$& $<0.01$& $<0.01$& $<0.01$& $<0.01$& $<0.01$& $<0.01$& $<0.01$\\ 
64 & $<0.01$& $<0.01$& 0.01 & 0.02 & 0.05 & 0.07 & 0.46 & 0.17 \\ 
128 & 0.02 & 0.02 & 0.02 & 0.03 & 0.1 & 0.26 & 2.02 & 1.03 \\ 
192 & 0.01 & 0.01 & 0.03 & 0.04 & 0.14 & 0.75 & 10.2 & 3.35 \\ 
256 & 0.03 & 0.04 & 0.04 & 0.03 & 0.18 & 1.58 & 24.7 & 7.86 \\ 
320 & 0.01 & 0.03 & 0.08 & 0.05 & 0.24 & 2.82 & 32.7 & 15.3 \\ 
384 & 0.03 & 0.04 & 0.12 & 0.07 & 0.3 & 4.49 & 41.9 & 26.2 \\ 
448 & 0.05 & 0.05 & 0.13 & 0.05 & 0.32 & 6.61 & 52.5 & 42.9 \\  

		\hline
	\end{tabular}
	\caption{\label{tab:time_standard}Averaged execution time in seconds of 
the heuristic \algo{} on several types of graphs for different values of $n$.}

\end{table}

When the topology of the graph is simple, as for instance for the paths, the 
cycles or the wheels, the algorithm goes quite fast, running in less than one 
second. The computational cost of the algorithm hugely increases when the graph 
is the complete bipartite graph, which can be explained by the peculiar 
structure of these graphs: Indeed, the algorithm will first of all compute a 
first path containing all the vertices of the smaller subset and the same number 
of vertices in the other one. All the remaining vertices will be considered as a 
path of length $1$ (since they are isolated when the smaller subset is removed), 
and the algorithm will spend a huge amount of time to merge one by one all these 
vertices with the first path. This very greedy step makes explode the 
computational cost when $n$ increases for these graphs. 

Detailed results in Appendix \ref{sec:detailed_results} display the average 
execution time for one execution of the heuristic \algo{} for the other 
collections. These results show that the heuristic \algo{} is well-adapted for 
real-world graphs from hundreds to thousands of vertices. The computational cost 
prevents however the use of the heuristic \algo{} when the graph has millions of 
vertices. It is nevertheless, in view of our computational results, the fastest 
solution to minimize the cyclic bandwidth sum on graphs. 
\makeatletter{}\section{Applications on complex networks}
\label{sec:complex}

We focused in this section on the assessment of the performances of the 
heuristic \algo{} on complex networks. Contrary to the previous section, the 
experiments are guided by the motivation to discover the structure of the 
network.  We then propose to use the value of \cbs{} only to assess the 
robustness of the heuristic, as performed previously. Then a visual validation 
of the performance of the heuristic is done to check the match between topology 
and labeling.

\subsection{Properties of complex networks}

Three well-known properties of real-world complex networks are often 
encountered: presence of communities, scale-free property and 
small-worldness.  

\paragraph{Graph with communities (COM)}
One definition of a community is a group of vertices such that the number of 
edges between the vertices of a community is significantly higher than between 
vertices belonging to different communities. This property is for instance 
well-known in social networks, where people tends to belong to groups of people. 
We considered the following stochastic block model to build a graph with three 
communities: each vertex is randomly assigned to one of the three communities. 
For each pair of vertices, an edge exists with a probability $p_{\text{intra}}$ 
if the two vertices belong to the same community, and with a probability 
$p_{\text{inter}}$ otherwise, with $p_{\text{inter}} < p_{\text{intra}}$. In our 
experiments, we set $p_{\text{inter}} = 0.01$ and $p_{\text{intra}} = 0.9$.

\paragraph{Scale-free network (SF)} 
Scale-free property means that the distribution of degrees follows a power law. 
This property has been highlighted in many real-world networks, such that social 
networks. It exists several methods to generate a scale-free network, among them 
we used the Barab\'asi-Albert model \cite{Albert2002}: from an initial connected 
graph, vertices are sequentially added, and attached to one existing vertex, 
chosen according to a probability which depends on the degree of the vertex. The 
higher the degree is, the higher the probability. The new vertices tend then to 
connect to vertices with high degree.

\paragraph{Small-world networks (SW)} 
The small-worldness is a property of networks whose the average shortest path 
length is small with regard of the number of vertices. The generative model used 
is the Watts-Strogatz model \cite{Watts1998}: starting from a regular ring 
lattice of degree $k$, an edge between two unlinked vertices is drawn with a 
probability $p$. We choose $k=4$ and $p=0.1$.

\subsection{Robustness of the heuristic}

As previously performed in Section~\ref{subsec:robustness}, the robustness of 
the heuristic is tested on complex networks with $100$ vertices. 
Tab.~\ref{tab:complex_sto_sum} shows the results of the tests. For each table, 
the first two columns give the name and the number of graphs in the collection, 
while the next three columns give the relative distance averaged over all 
graphs 
for the different values of $k$.

\begin{table}[!ht]
	\centering
	\small
	\begin{tabular}{|lr|rrr|}
		\hline
		\multicolumn{2}{|c|}{\textbf{Collection}} & 
		\multicolumn{3}{c|}{\textbf{Relative distance}} \\
		\hline
		\multicolumn{1}{|c}{\emph{Name}} & 
		\multicolumn{1}{c|}{\emph{\# graphs}} & 
		\multicolumn{1}{c}{$k=10$} & 
		\multicolumn{1}{c}{$k=20$} & 
		\multicolumn{1}{c|}{$k=50$} \\ \hline

		\makeatletter{}SF & 10 & 0.02 & 0.02 & 0.01 \\ 
SW & 10 & 0.00 & 0.00 & 0.00 \\ 
COM & 10 & 0.00 & 0.00 & 0.00 \\ 
 
		\hline

\end{tabular}

\caption{\label{tab:complex_sto_sum}Summary of the study of the robustness of 
the heuristic \algo{} on complex networks.  The first two columns indicate the 
name of the collection and the number of graphs in the collection. The next 
three columns report the relative distance between the median value of the 
minimal value of \cbs{} over $k$ repetitions and the minimal value obtained over 
all repetitions, for three values of $k$. Detailed results are given in 
Tab.~\ref{tab:complex_sto}.}
\end{table}

The results confirm the good robustness of the heuristic: for complex networks 
with three different properties, the value of \cbs{} slightly improved for the 
scale-free network when the number of repetitions increases, and is at a 
standstill for the other networks.

\subsection{Visualization of networks before and after labeling}
A visual validation of the consistence between labeling and topology is 
succinctly performed in this section. For one instance of each type of networks, 
the graph is displayed using a layout consistent with the topology. The label of 
each vertex is displayed using a shade of gray defined such that two close 
colors denotes a short distance within the meaning of Eq.\ref{eq:dc}. 
Figs.~\ref{fig:sf}, \ref{fig:sw} and \ref{fig:com} show for the three type of 
networks two representations of the graphs, before and after relabeling. 

\begin{figure}[htp]
	\centering
 
	\includegraphics[width=0.3\columnwidth]{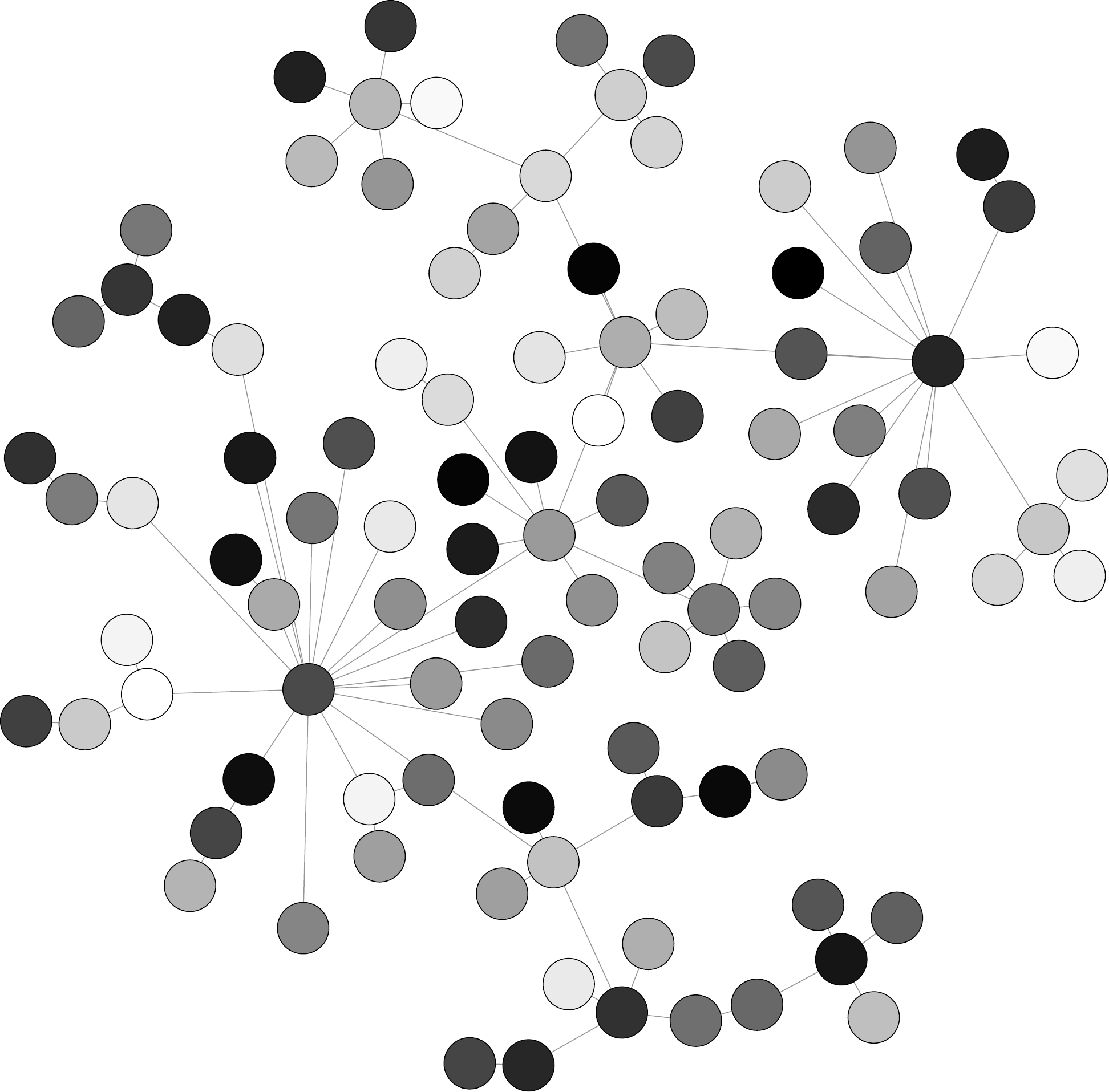} \hspace{1cm}
	\includegraphics[width=0.3\columnwidth]{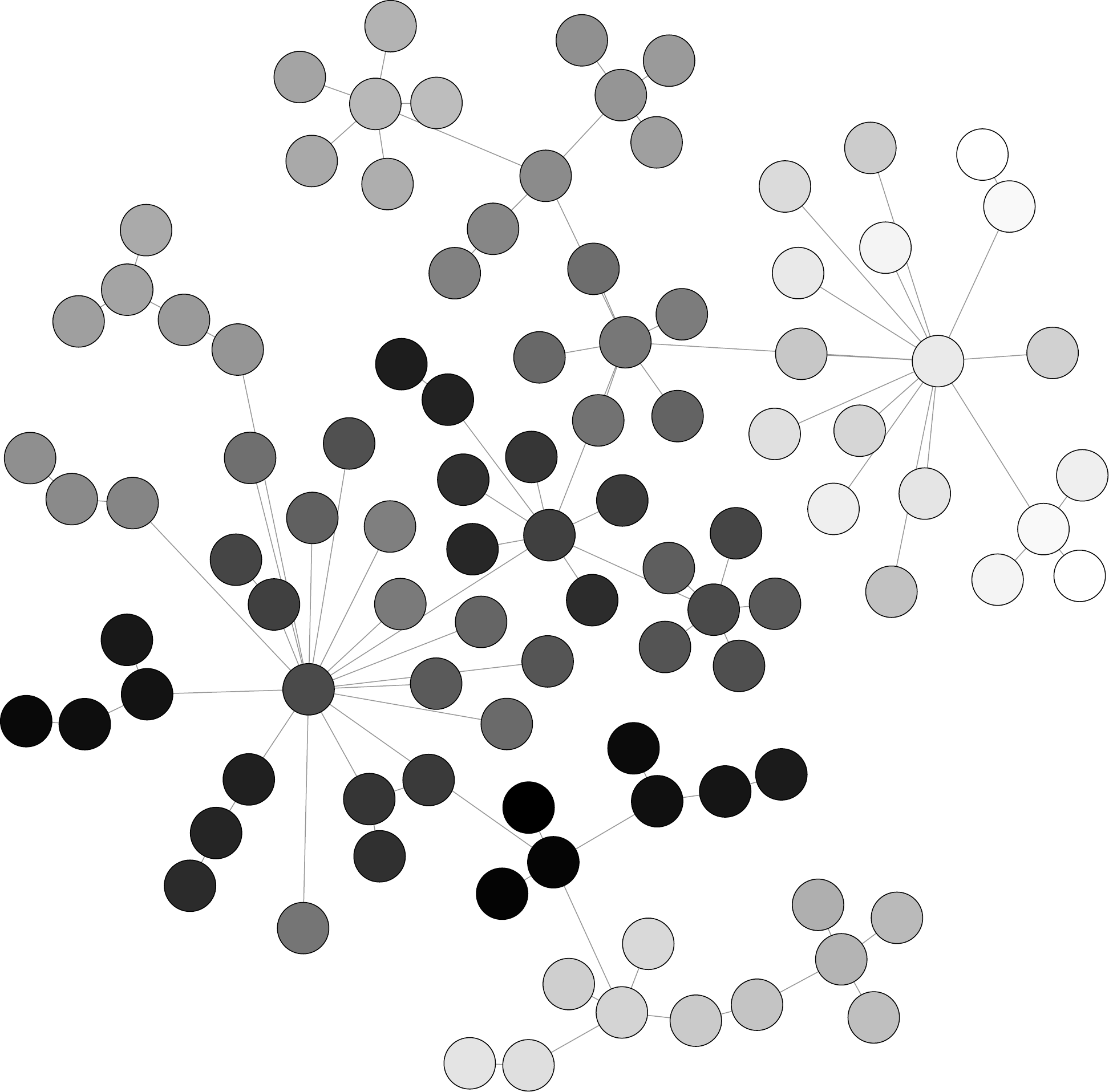}

	\caption{\label{fig:sf}The left and right figures show the vertices of a 
scale-free network respectively before and after applying the heuristic \algo{} 
to relabel vertices. The color of the vertex depends on the label: two close 
colors means that the labels are close as well.\\ \cbs{} value before 
relabeling: 2734 ; \cbs{} value after relabeling: 375.}
\end{figure}

\begin{figure}[htp]
	\centering
	\includegraphics[width=0.3\columnwidth]{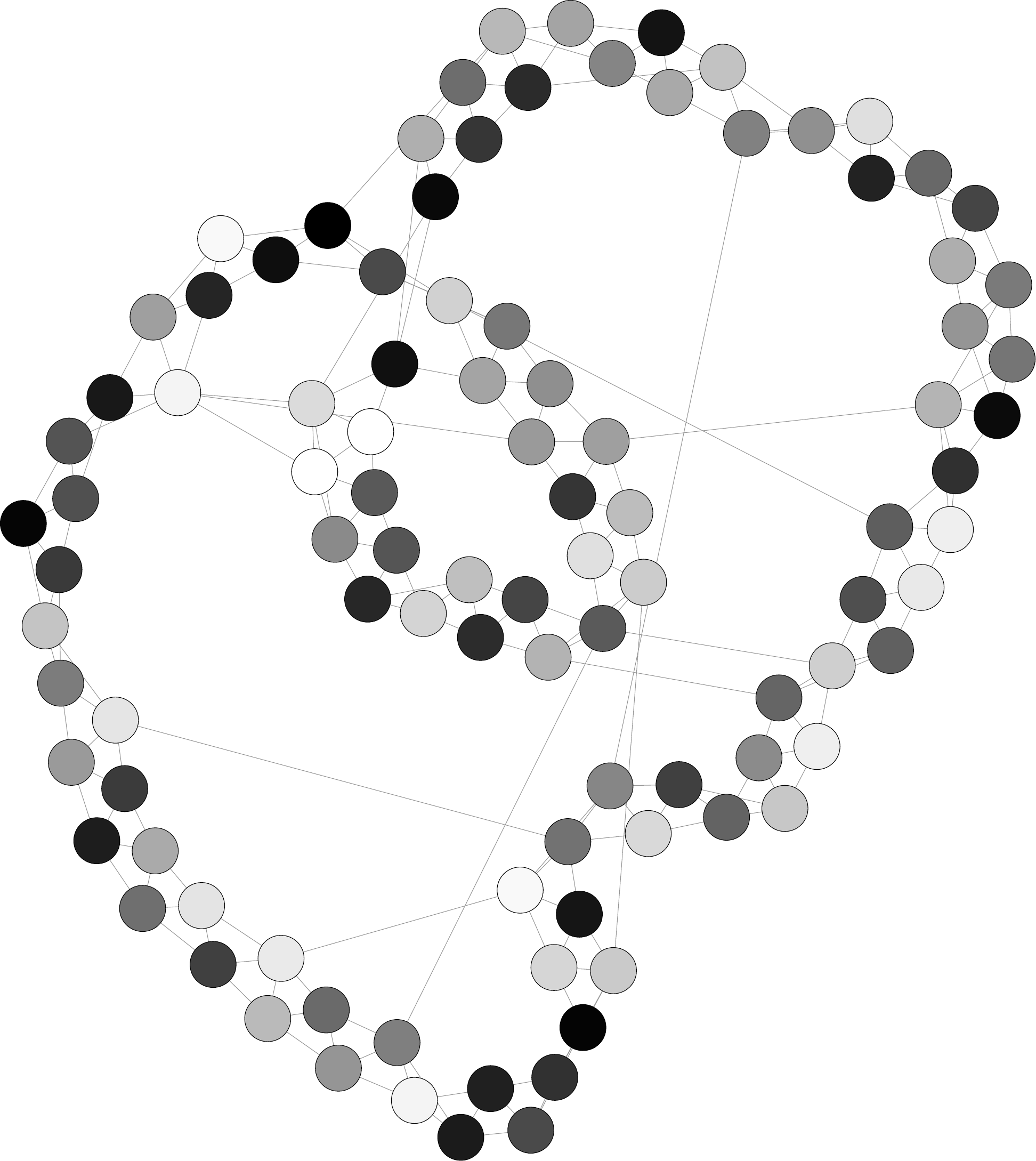} \hspace{1cm}
	\includegraphics[width=0.3\columnwidth]{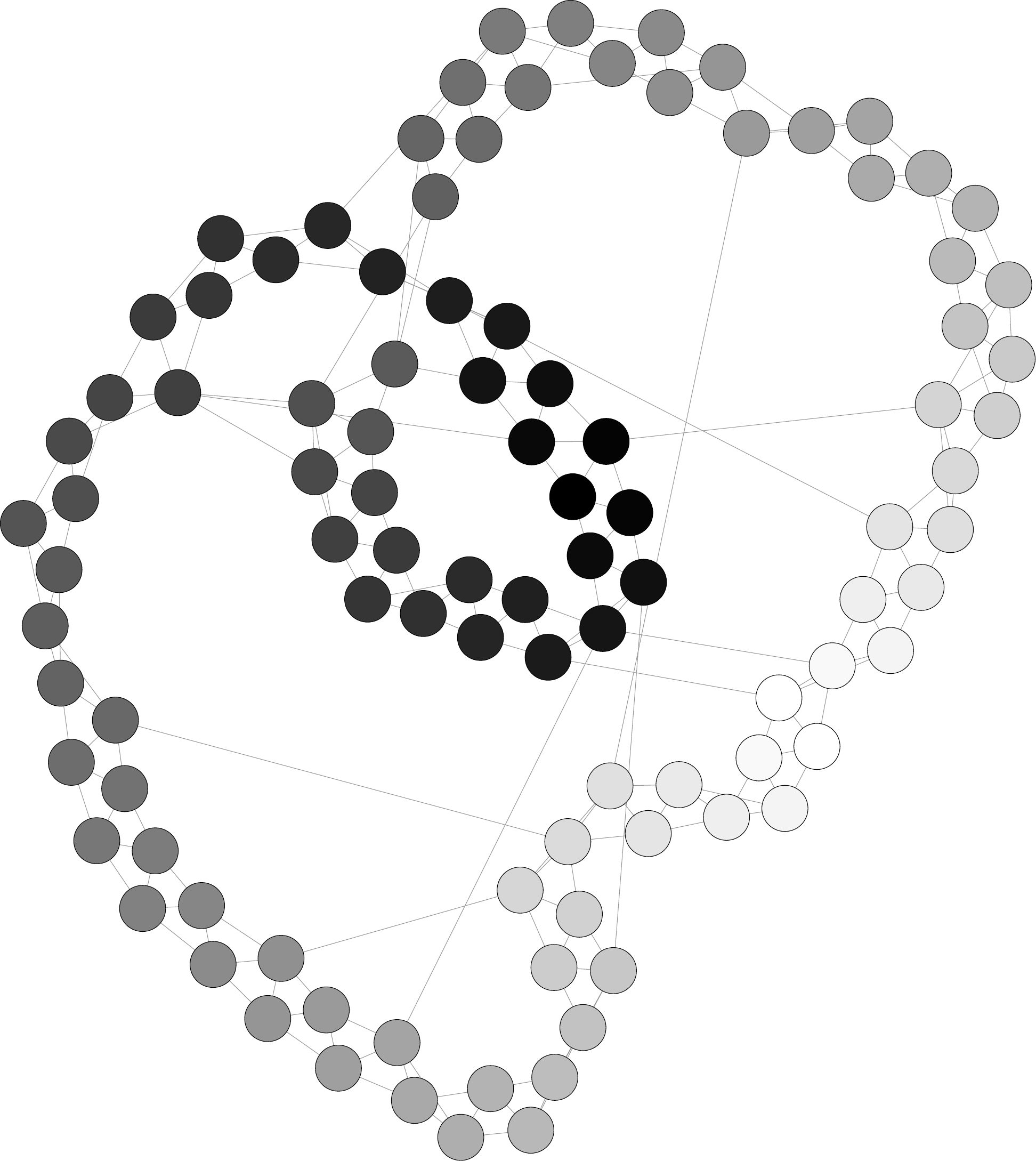}

	\caption{\label{fig:sw}The left and right figures show the vertices of a 
small-world network respectively before and after applying the heuristic \algo{} 
to relabel vertices. The color of the vertex depends on the label: two close 
colors means that the labels are close as well.\\ \cbs{} value before 
relabeling: 5283 ; \cbs{} value after relabeling: 752.}
\end{figure}

\begin{figure}[htp]
	\centering
	\includegraphics[width=0.35\columnwidth]{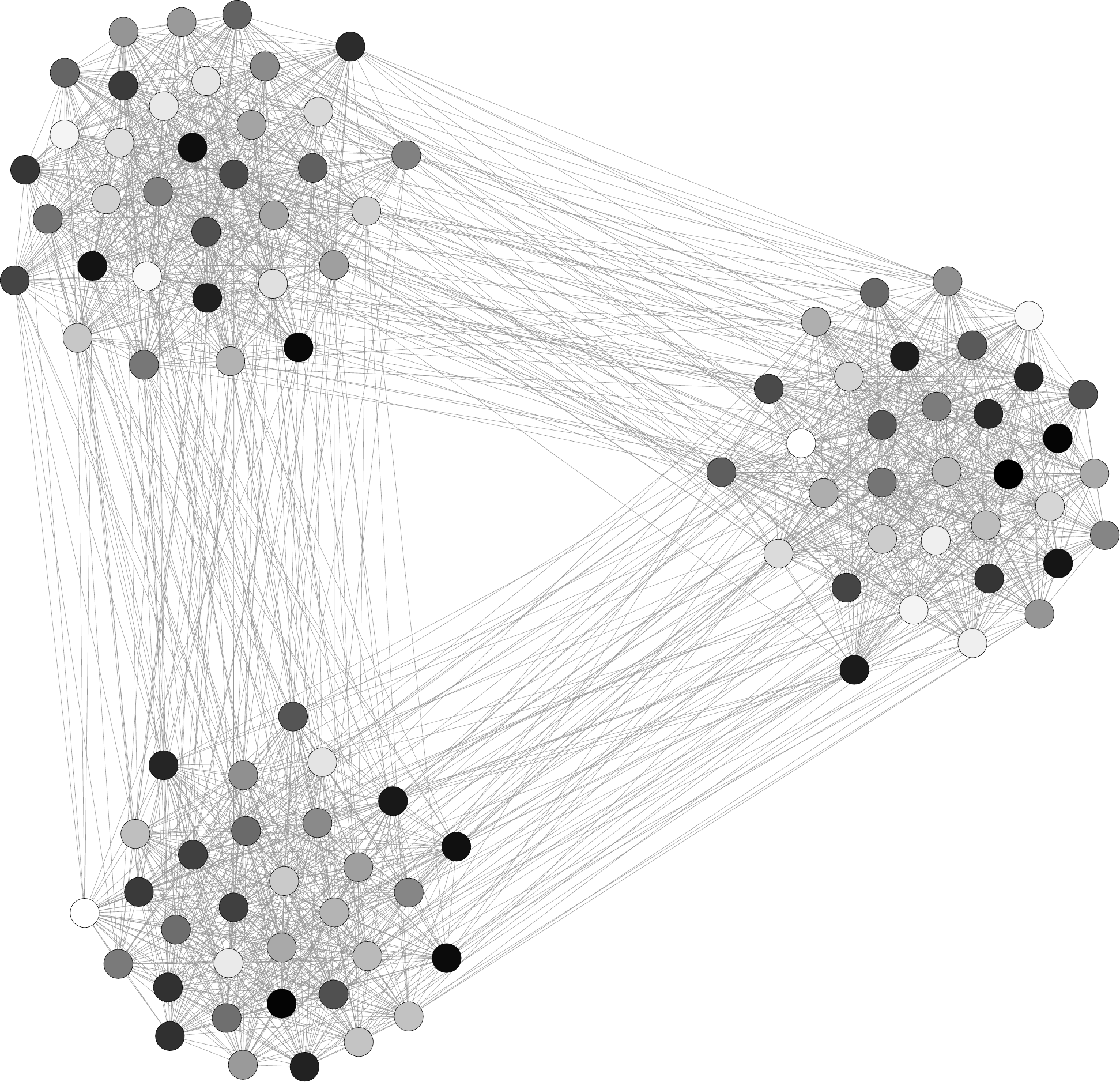} \hspace{1cm}
	\includegraphics[width=0.35\columnwidth]{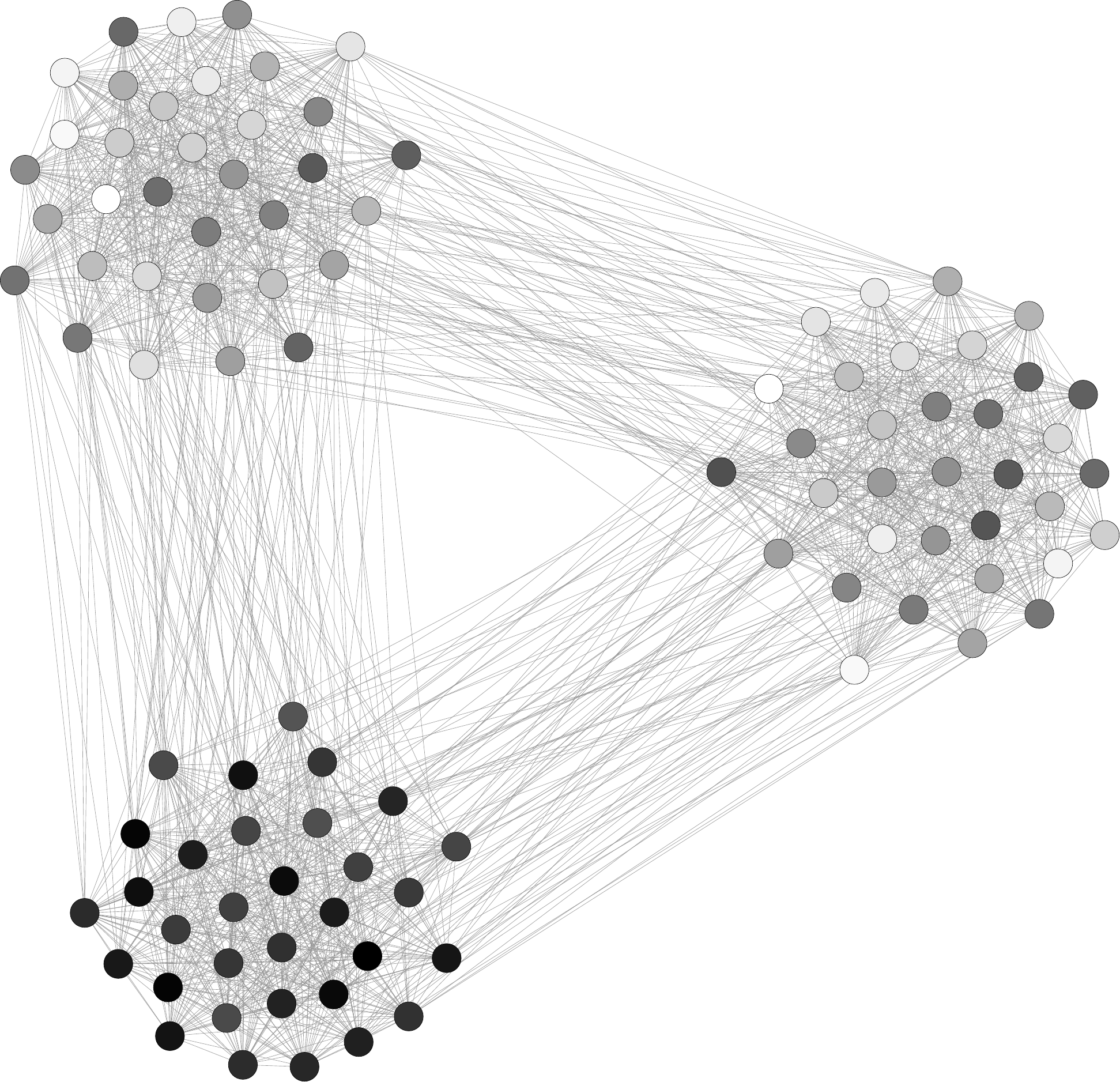}
	
	\caption{\label{fig:com}The left and right figures show the vertices of the 
network with communities respectively before and after applying the heuristic 
\algo{} to relabel vertices. The color of the vertex depends on the label: two 
close colors means that the labels are close as well. \\ \cbs{} value before 
relabeling: 45270 ; \cbs{} value after relabeling: 26589.}
\end{figure}

These illustrations show that the three types of networks highlight peculiar 
structures: the scale-free network has a stringy structure, the small-world has 
regular structures with random links and the structure in communities present 
three dense parts. The labels before relabeling do not follow in any sense this 
structure. The heuristic corrects this by relabeling the vertices in accordance 
with the topology. 
\makeatletter{}\section{Extension to weighted graphs}
\label{sec:extension}

Until now, the analysis was focused on unweighted graphs as up to our knowledge, 
there is no theoretical study about the cyclic bandwidth sum problem when the 
graph is considered as weighted. It is hence easier to assess the performance of 
the heuristic to restrain our study on that category of graphs. It is relevant 
nevertheless to consider the weight on graphs, as they are very common, 
especially in real-world graph analysis. The problem of \cbsp{} can be extended 
to take into account the weight of each edge in the sum of difference of labels. 
If we note $w_{uv}$ the weight between adjacent vertices $u$ and $v$, the 
weighted \cbsp{} is defined by:
\begin{align}
\min_{\mathbf{\pi}} f(\pi)\quad \text{with} \quad f(\mathbf{\pi})= 
\sum_{\{u,v\}\in E}w_{uv} d_H(\pi(u),\pi(v))
\end{align}

The weighted version of the heuristic is very similar to the one in the 
unweighted case. Two minor modifications have to be considered: the computation 
of the Jaccard similarity index in Step 1, as the neighborhood is influenced by 
the 
weights, and the incremental computation of \cbs{} in Step 2. The first problem 
can be addressed by defining a weighted similarity index between two vertices 
$u$ and $v$ as the following: 
\begin{align}
J_w(u,v) = \frac{N(u,v)}{D(u,v)}
\end{align}
where $N(u,v)$ represents the weight of neighbors shared by the two vertices and 
is defined by:
\begin{align}
N(u,v) =  2w_{u,v} + \sum_{\substack{x \in V \\ x\sim u \\ 
x\sim v}} {\min(w_{ux}, w_{vx}})
\end{align} 
and $D(u,v)$ represents the total weight of neighborhood of $u$ and $v$, and is 
defined by:
\begin{align}
D(u,v) =  2w_{u,v} + \sum_{\substack{x \in V \\ x\sim u \\ x\sim v}} 
{\frac{w_{ux} + w_{vw}}{2}} +  \sum_{\substack{x \in V \\ x\sim u \\ x\nsim 
v}}w_{ux} +   \sum_{\substack{x \in V \\ x\nsim u \\ x\sim v}}w_{vx}
\end{align}
We can note that if all weights are set to $1$ i.e. the graph is unweighted, we 
have $N(u,v) = 2 + \sharp\{\text{Common neighbors of $u$ and $v$}\}$ and $D = 
2 + \sharp\{\text{All neighbors of $u$ and $v$}\}$ which corresponds to the 
similarity index defined previously.

Adaptation of the incremental \cbs{} is, for its part, trivial, since it is only 
necessary to multiply each term we add and remove by the weight of the 
considered edge. 

There is no theoretical study about the weighted cyclic bandwidth sum problem to 
test the validity of the heuristic in the weighted case. Besides, it could be 
quite tricky to clearly characterize the structure of a weighted graph. We 
propose here only to illustrate the good behavior of the algorithm, by applying 
the heuristic on a weighted real-world complex network. The network is collected 
from data sets of the SocioPatterns projects \cite{Fournet2014}: it represents 
a face-to-face 
networks between students and teachers in a primary, where the people are the 
vertices of the graph and the weight between the vertices measures the 
cumulative duration of contact for one day.

Fig.~\ref{fig:school} shows the network after relabeling using the same color 
code as in the previous section. The layout is given by the data.

\begin{figure}[htp]
  \centering
	\includegraphics[width=0.60\columnwidth]{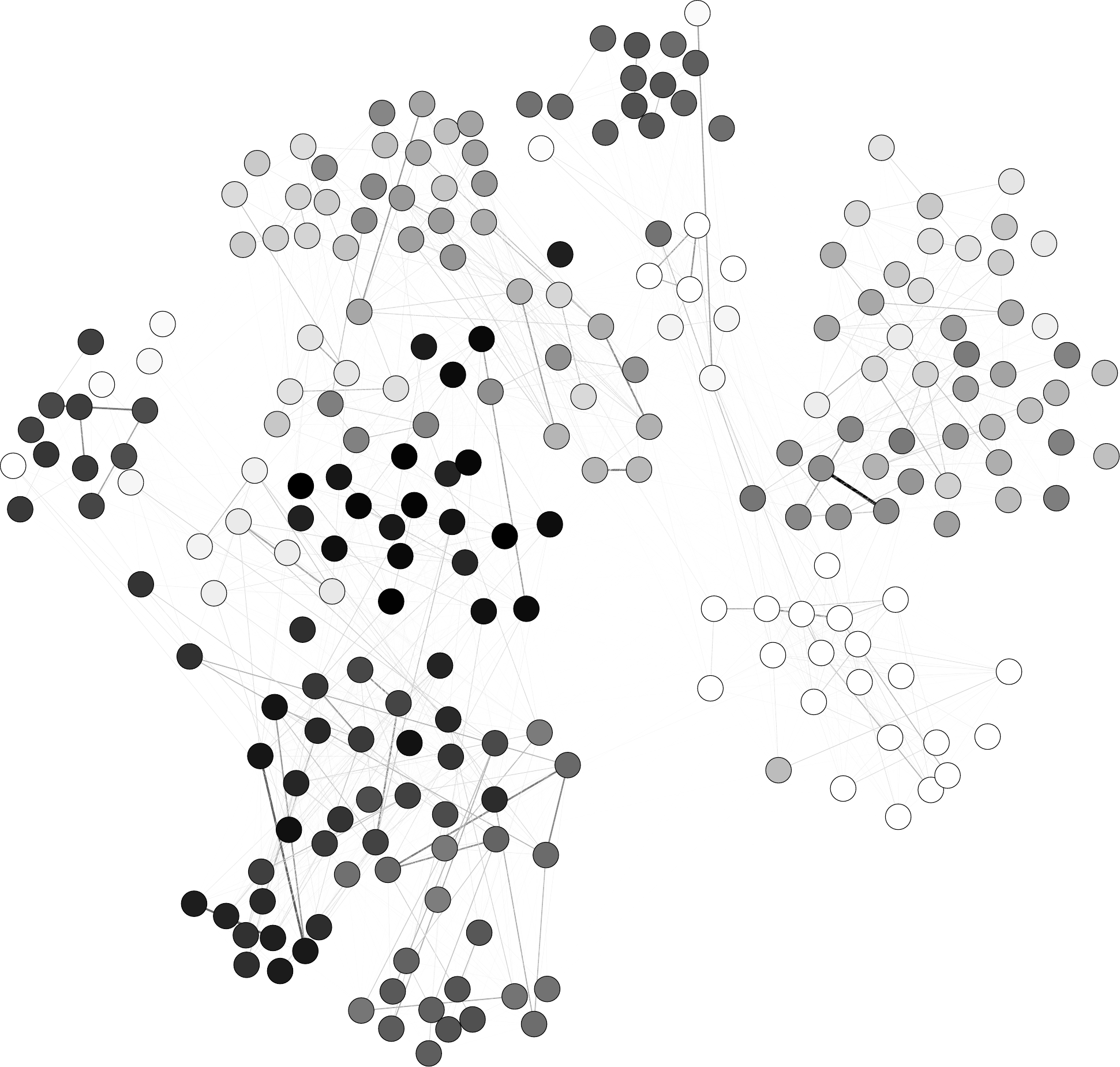}

\caption{\label{fig:school}Visualization of the face-to-face network in a 
primary school from the SocioPatterns project \cite{Fournet2014}. Each vertex 
represents either a student or a teacher. The color of the vertex depends on the 
label: two close colors means that the labels are close as well. The position 
of the vertices is from the data and reflects the topology of the network. The 
thickness of the edge denotes its weights.}
\end{figure}

The relation between the labeling and the topology is less obvious than 
previously. It stays nevertheless consistent with the positions of the vertices: 
the colors in the networks looks homogeneous. Each pile of vertices corresponds 
to a class in the school, and the vertices inside each class have close colors. 
Even if guarantees about the good behavior of the heuristic are not available, 
this example shows that the heuristic produces a reasonable result, in line with 
the ground truth, when it deals with weighted graphs.

\makeatletter{}\section{Conclusion}
\label{sec:conclusion}

The topology of a network is a crucial element in the analysis of processes 
lying over it. If the structure of networks can be described by models, 
it remains in many cases delicate to easily obtain an ordering of the vertices. 
We proposed in this paper a heuristic to discover the topology of the network 
without any assumptions on its structure. This heuristic has been of great 
interest to find an approximate solution of a known labeling problem called 
cyclic bandwidth sum problem, which has been used as a criterion to check the 
consistence between the topology and the labeling. Many extensions of 
this algorithm can be considered: we considered briefly the weighted graphs in 
the end of the paper. One could also deal with the case of directed graph, 
using the same idea of taking into account the local and global structure of 
the network.

\section*{Funding}

This work is supported by the programs ARC 5 and ARC 6 of the r\'egion 
Rh\^one-Alpes.

\clearpage

\makeatletter{}\appendix

\section{Incremental \cbs}
\label{apdx:theorems}

This appendix presents the results of the changes of the \cbs{} value for the 
incremental update of \cbs{} introduced in Section~\ref{subsec:step2}. The 
proofs of the following models follow the same reasoning as the proof of 
Theorem~\ref{thm:k01}.

\begin{theorem}{Edges between $k$ and the vertices of $O_2$}

Let $u \in O_2$ and $\Delta = \pi_i[u] - \pi_i[k]$. We have:
\begin{enumerate}
\item if $\Delta \leq \frac{n}{2} - p$ then
  $\cbs{}^{(i+1)}(k,u)=\cbs{}^{(i)}(k,u)+p$.
\item if $\Delta \geq \frac{n}{2} $ then
  $\cbs{}^{(i+1)}(k,u)=\cbs{}^{(i)}(k,u)-p$.
\item  if $\frac{n}{2} -p < \Delta < \frac{n}{2} $ then
$\cbs{}^{(i+1)}(k,u)=\cbs{}^{(i)}(k,u)-2\Delta + (n-p)$
\end{enumerate}
\end{theorem}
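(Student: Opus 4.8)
The plan is to follow the proof of Theorem~\ref{thm:k01} almost verbatim, the only genuine change being that here the shift of $k$ moves it \emph{away from} the vertices of $O_2$ rather than towards them, which is encoded in the opposite sign convention $\Delta=\pi_i[u]-\pi_i[k]$. First I would fix $u\in O_2$ and record the ordering of the labels involved: since $u$ lies after $k$ in the current labeling we have $\pi_i[k]<\pi_i[u]$, and because $\pi_{i+1}[k]=\pi_i[k]-p$ while $\pi_{i+1}[u]=\pi_i[u]$ (Eqs.~\ref{eq:pik} and \ref{eq:pio2}) we also have $\pi_{i+1}[k]<\pi_{i+1}[u]$. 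Hence $|\pi[u]-\pi[k]|$ carries no sign ambiguity at either index, equalling $\Delta$ at index $i$ and $\Delta+p$ at index $i+1$, so the absolute values in Eq.~\ref{eq:dc} can be dropped.

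Next I would determine which term of the $\min$ in Eq.~\ref{eq:dc} is active at each index. At index $i$ the first term $\Delta$ is retained iff $\Delta\le n-\Delta$, i.e. $\Delta\le\frac{n}{2}$; at index $i+1$ the first term $\Delta+p$ is retained iff $\Delta+p\le n-(\Delta+p)$, i.e. $\Delta\le\frac{n}{2}-p$. Since $p>0$ the second threshold lies strictly below the first, so the real line splits into exactly the three regimes of the statement: $\Delta\le\frac{n}{2}-p$ (first term at both indices), $\Delta\ge\frac{n}{2}$ (second term at both indices), and $\frac{n}{2}-p<\Delta<\frac{n}{2}$ (first term at index $i$, second term at index $i+1$).

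Finally, in each regime I would substitute the active branch of Eq.~\ref{eq:dc} and read off $\cbs{}^{(i+1)}(k,u)-\cbs{}^{(i)}(k,u)$. In the first regime this increment is $(\Delta+p)-\Delta=p$; in the second, $\bigl(n-(\Delta+p)\bigr)-\bigl(n-\Delta\bigr)=-p$; in the third, $\bigl(n-(\Delta+p)\bigr)-\Delta=(n-p)-2\Delta$, which is the stated $-2\Delta+(n-p)$. I do not anticipate any real obstacle here: all three computations are elementary, and the only point requiring care is exactly the sign convention --- since decreasing $\pi[k]$ by $p$ increases its distance to any vertex of $O_2$ (whereas it decreased the distance to vertices of $O_1$), the three cases and the signs of the $\pm p$ and $\pm 2\Delta$ terms come out mirrored relative to Theorem~\ref{thm:k01}.
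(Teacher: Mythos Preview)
Your proposal is correct and follows exactly the approach the paper indicates: it mirrors the proof of Theorem~\ref{thm:k01} with the sign convention reversed because $k$ now moves away from $u\in O_2$, yielding $|\pi_{i+1}[u]-\pi_{i+1}[k]|=\Delta+p$ instead of $\Delta-p$. The identification of the three regimes and the resulting increments $+p$, $-p$, and $-2\Delta+(n-p)$ are all computed correctly.
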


\begin{theorem}{Edges between $k$ and the vertices of $P$}

Let $u \in P$ and $\Delta = \pi_i[k] - \pi_i[u]$. We have:
\begin{enumerate}
\item if $(p+1) - \frac{n}{2} \leq \Delta \leq \frac{n}{2}$ then
  $\cbs{}^{(i+1)}(k,u)=\cbs{}^{(i)}(k,u)-2\Delta + (p+1)$.
\item if $\Delta > \frac{n}{2} $ then
  $\cbs{}^{(i+1)}(k,u)=\cbs{}^{(i)}(k,u)- n + (p+1)$.
\item  if $\Delta < (p+1) - \frac{n}{2} $ then
$\cbs{}^{(i+1)}(k,u)=\cbs{}^{(i)}(k,u)+ n - (p+1)$
\end{enumerate}
\end{theorem}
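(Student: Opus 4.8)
The plan is to mimic exactly the argument of Theorem~\ref{thm:k01}, the only change being that now the endpoint $u$ lies in $P$, so from Eq.~\ref{eq:pik} and Eq.~\ref{eq:pip} the two labels shift in \emph{opposite} directions: $\pi_{i+1}[k] = \pi_i[k] - p$ while $\pi_{i+1}[u] = \pi_i[u] + 1$. Hence the signed label gap $\pi[k]-\pi[u]$ changes from $\Delta$ at index $i$ to $\Delta - (p+1)$ at index $i+1$. First I would fix the sign of the raw difference: since all of $P$ sits to the left of $k$ before the shift and $P$ is a block of $p$ vertices that $k$ jumps over, one checks that $\pi_i[k] > \pi_i[u]$ and that after the shift $k$ moves to the left of the whole block, so $\pi_{i+1}[k] - \pi_{i+1}[u] = \Delta - (p+1) < 0$; thus at index $i+1$ the relevant "unsigned" term is $|\Delta - (p+1)| = (p+1) - \Delta$, valid precisely when $\Delta \le p$ which always holds here (the gap between $k$ and a vertex of $P$ is at most $p$). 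At index $i$ the unsigned term is simply $\Delta$.

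Next I would determine, at each of the two indices, which branch of the $\min$ in Eq.~\ref{eq:dc} is active. At index $i$ the first term $\Delta$ is chosen iff $\Delta \le n-\Delta$, i.e.\ $\Delta \le \frac{n}{2}$; at index $i+1$ the first term $(p+1)-\Delta$ is chosen iff $(p+1)-\Delta \le n-((p+1)-\Delta)$, i.e.\ $2((p+1)-\Delta)\le n$, i.e.\ $\Delta \ge (p+1) - \frac{n}{2}$. Crossing these two thresholds gives exactly three regimes, matching the statement:
\begin{enumerate}
\item $(p+1)-\frac{n}{2} \le \Delta \le \frac{n}{2}$: first term active at both indices, so $\cbs{}^{(i+1)}(k,u)-\cbs{}^{(i)}(k,u) = ((p+1)-\Delta) - \Delta = (p+1) - 2\Delta$;
\item $\Delta > \frac{n}{2}$: second term active at both indices, so the difference is $(n-((p+1)-\Delta)) - (n-\Delta) = -(p+1)+2\Delta$... wait, here one must be careful: when $\Delta > \frac{n}{2}$ one also has $\Delta \ge (p+1)-\frac{n}{2}$ automatically only if $p+1 \le n$, which is true, so the first term is active at index $i+1$; hence actually the difference is $((p+1)-\Delta) - (n-\Delta) = (p+1) - n$;
\item $\Delta < (p+1) - \frac{n}{2}$: then the second term is active at index $i+1$ while the first is active at index $i$ (since $(p+1)-\frac{n}{2}\le\frac{n}{2}$ forces $\Delta<\frac{n}{2}$), giving $(n-((p+1)-\Delta)) - \Delta = n - (p+1)$.
\end{enumerate}
In each case one substitutes $\pi_{i+1}[k]=\pi_i[k]-p$, $\pi_{i+1}[u]=\pi_i[u]+1$ and simplifies, exactly as in the three displayed computations of the proof of Theorem~\ref{thm:k01}.

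The step I expect to need the most care is the bookkeeping of \emph{which} branch of $\min$ is active in the boundary regime: unlike the $k$--$O_1$ case, here the signed gap flips sign when $k$ hops over the block $P$, so the threshold comparison at index $i+1$ is between $(p+1)-\Delta$ and $n-((p+1)-\Delta)$ rather than between $\Delta-p$ and its complement. Getting the inequality $\Delta \ge (p+1)-\frac{n}{2}$ (rather than, say, $\Delta \le \frac{n}{2}+$ something) right, and checking that it is this crossing — together with $\Delta = \frac{n}{2}$ — that partitions the line into the three stated intervals, is the only genuinely non-mechanical point; the rest is the same linear algebra as Theorem~\ref{thm:k01}. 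I would also record the trivial sanity checks that the three cases agree on their common boundaries $\Delta = (p+1)-\frac{n}{2}$ and $\Delta = \frac{n}{2}$, which confirms the formulas and closes the proof.
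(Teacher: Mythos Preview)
Your plan is correct and is precisely the ``same reasoning as Theorem~\ref{thm:k01}'' that the paper invokes in lieu of a written proof: you use Eqs.~\ref{eq:pik} and~\ref{eq:pip} to get the new signed gap $\Delta-(p+1)$, observe that it flips sign (so the unsigned difference at index $i+1$ is $(p+1)-\Delta$), and then determine the active branch of the $\min$ in Eq.~\ref{eq:dc} at each index to obtain the three regimes with thresholds $\Delta=\frac{n}{2}$ and $\Delta=(p+1)-\frac{n}{2}$. The only cosmetic issue is the momentary false start in your case~2 (``\ldots wait, here one must be careful''): in a clean write-up just record directly that $\Delta>\frac{n}{2}$ forces the second term at index~$i$ and, since $p+1\le n$ implies $(p+1)-\frac{n}{2}\le\frac{n}{2}<\Delta$, the first term at index~$i+1$, giving $((p+1)-\Delta)-(n-\Delta)=(p+1)-n$.
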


\begin{theorem}{Edges between the vertices of $P$ and the vertices of $O_1$}

Let $u \in P$, $v \in O_1$ and $\Delta = \pi_i[u] - \pi_i[v]$. We have:
\begin{enumerate}
\item if $\Delta \leq \frac{n}{2} - 1$ then
  $\cbs{}^{(i+1)}(u,v)=\cbs{}^{(i)}(u,v) + 1$.
\item if $\Delta \geq \frac{n}{2} $ then
  $\cbs{}^{(i+1)}(u,v)=\cbs{}^{(i)}(u,v) - 1$.
\item  if $\frac{n}{2} - 1 < \Delta < \frac{n}{2} $ then
$\cbs{}^{(i+1)}(u,v)=\cbs{}^{(i)}(u,v)$
\end{enumerate}
\end{theorem}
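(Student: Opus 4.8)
The plan is to mirror the proof of Theorem~\ref{thm:k01}, exploiting the label changes in Eqs.~\ref{eq:pip} and \ref{eq:pio1}: the vertex $u\in P$ shifts by $+1$ while the vertex $v\in O_1$ stays fixed. First I would pin down the sign of the label difference. Since at index $i$ the labeling reads $O_1$, then $P$, then $k$, then $O_2$, every $u\in P$ lies to the right of every $v\in O_1$, so $\pi_i[u]>\pi_i[v]$ and hence $\Delta>0$. Using $\pi_{i+1}[u]=\pi_i[u]+1$ and $\pi_{i+1}[v]=\pi_i[v]$, the difference at the next index is $\pi_{i+1}[u]-\pi_{i+1}[v]=\Delta+1>0$, so the absolute values in Eq.~\ref{eq:dc} can be dropped at both indices.

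Next I would determine, at each index, which of the two arguments of the minimum defining $d_C$ is retained. At index $i$ the first argument $\Delta$ is selected precisely when $\Delta\le n-\Delta$, i.e. $\Delta\le n/2$; at index $i+1$ the first argument $\Delta+1$ is selected when $\Delta+1\le n-(\Delta+1)$, i.e. $\Delta\le n/2-1$. These two thresholds carve out exactly the three regimes in the statement.

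I would then settle the three cases by direct subtraction. When $\Delta\le n/2-1$ the first argument is used at both indices, so the contribution grows by $(\Delta+1)-\Delta=1$, giving Case 1. When $\Delta\ge n/2$ the second argument $n-(\cdot)$ is used at both indices, so the contribution changes by $(n-\Delta-1)-(n-\Delta)=-1$, giving Case 2. The intermediate band $n/2-1<\Delta<n/2$ is the delicate one: here the first argument is retained at index $i$ but the second at index $i+1$, so the raw change is $(n-\Delta-1)-\Delta=n-2\Delta-1$.

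The main obstacle is reconciling this last expression with the claimed value of $0$. The key observation is that $\Delta$ is an integer (a difference of integer labels), so the open interval $(n/2-1,\,n/2)$ contains an integer only when $n$ is odd, in which case the sole admissible value is $\Delta=(n-1)/2$. Substituting this into $n-2\Delta-1$ yields $n-(n-1)-1=0$, which establishes Case 3 and completes the argument. The remainder is routine bookkeeping parallel to Theorem~\ref{thm:k01}.
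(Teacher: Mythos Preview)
Your proposal is correct and follows exactly the approach the paper indicates, namely mirroring the proof of Theorem~\ref{thm:k01} with the shifts from Eqs.~\ref{eq:pip} and~\ref{eq:pio1}. Your integrality observation for Case~3 (that the open interval $(n/2-1,n/2)$ contains an integer only when $n$ is odd, forcing $\Delta=(n-1)/2$ and hence $n-2\Delta-1=0$) is a detail the paper does not spell out but which is indeed needed to reconcile the raw difference with the stated value of~$0$.
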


\begin{theorem}{Edges between the vertices of $P$ and the vertices of $O_2$}

Let $u \in P$, $v \in O_2$ and $\Delta = \pi_i[v] - \pi_i[u]$. We have:
\begin{enumerate}
\item if $\Delta \leq \frac{n}{2}$ then
  $\cbs{}^{(i+1)}(u,v)=\cbs{}^{(i)}(u,v) - 1$.
\item if $\Delta \geq \frac{n}{2} + 1 $ then
  $\cbs{}^{(i+1)}(u,v)=\cbs{}^{(i)}(u,v) + 1$.
\item  if $\frac{n}{2} < \Delta < \frac{n}{2} + 1 $ then
$\cbs{}^{(i+1)}(u,v)=\cbs{}^{(i)}(u,v)$
\end{enumerate}
\end{theorem}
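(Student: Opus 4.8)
The plan is to track the single edge $\{u,v\}$ with $u \in P$ and $v \in O_2$ across the shift from index $i$ to $i+1$, exactly in the spirit of the proof of Theorem~\ref{thm:k01}, and reduce the claim to comparing two instances of the minimum in Eq.~\ref{eq:dc}. First I would fix the signs of the label gaps so that the absolute values can be dropped. Since $v \in O_2$ sits after both $P$ and $k$ at index $i$, we have $\pi_i[u] < \pi_i[v]$, hence $|\pi_i[v]-\pi_i[u]| = \Delta$, and in fact $\Delta \geq 2$ because $k$ lies strictly between the block $P$ and the block $O_2$. Using Eq.~\ref{eq:pip} for $u \in P$ and Eq.~\ref{eq:pio2} for $v \in O_2$ gives
\begin{align}
\pi_{i+1}[v] - \pi_{i+1}[u] = \pi_i[v] - (\pi_i[u] + 1) = \Delta - 1 > 0,
\end{align}
so the gap between the two labels simply decreases by one unit under the shift, and both absolute values vanish without sign ambiguity.

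Next I would identify which branch of the minimum in Eq.~\ref{eq:dc} is active at each index. At index $i$ the gap is $\Delta$, so the first term $\Delta$ is retained iff $\Delta \leq n - \Delta$, i.e. iff $\Delta \leq \frac{n}{2}$; at index $i+1$ the gap is $\Delta - 1$, so the first term $\Delta - 1$ is retained iff $\Delta - 1 \leq n - (\Delta-1)$, i.e. iff $\Delta \leq \frac{n}{2} + 1$. The two thresholds $\frac{n}{2}$ and $\frac{n}{2}+1$ partition the admissible range of $\Delta$ into precisely the three cases of the statement. In the regime $\Delta \leq \frac{n}{2}$ the first branch is active at both indices, so $\cbs{}^{(i+1)}(u,v) - \cbs{}^{(i)}(u,v) = (\Delta - 1) - \Delta = -1$; in the regime $\Delta \geq \frac{n}{2} + 1$ the second branch is active at both indices, so the difference is $(n - \Delta + 1) - (n - \Delta) = +1$. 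These dispatch cases~1 and~2 directly.

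The main obstacle is the intermediate case $\frac{n}{2} < \Delta < \frac{n}{2} + 1$, where the active branch switches: the second term is retained at index $i$ while the first term is retained at index $i+1$. A naive subtraction gives $(\Delta - 1) - (n - \Delta) = 2\Delta - (n+1)$, which is not manifestly zero. The resolution is to exploit integrality: $\Delta = \pi_i[v] - \pi_i[u]$ is an integer, and the open interval $\left(\frac{n}{2}, \frac{n}{2}+1\right)$, of length one, contains an integer only when $n$ is odd, in which case that integer is forced to be $\Delta = \frac{n+1}{2}$. Substituting this value into $2\Delta - (n+1)$ yields exactly $0$, so $\cbs{}^{(i+1)}(u,v) = \cbs{}^{(i)}(u,v)$, establishing case~3. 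Equivalently, one checks that at $\Delta = \frac{n+1}{2}$ both $d_C$ values equal $\frac{n-1}{2}$, confirming that no change occurs in this boundary regime.
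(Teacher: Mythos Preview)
Your argument is correct and mirrors the approach the paper prescribes, namely repeating the reasoning of Theorem~\ref{thm:k01} with the label updates of Eqs.~\ref{eq:pip} and~\ref{eq:pio2}. Your explicit integrality argument in case~3, forcing $\Delta = \tfrac{n+1}{2}$ and hence $2\Delta - (n+1) = 0$, is a welcome clarification that the paper leaves implicit when it states the increment as~$0$ rather than as the raw expression $2\Delta - (n+1)$.
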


\section{Detailed results of computational experiments}
\label{sec:detailed_results}

This section presents the results for the Cartesian products of graphs described 
in Section~\ref{sec:results}, for different values of $m$ and $n$.

\subsection{Performance}

For the following tables, the first two columns give the name of the graphs 
used in the Cartesian products, the third column displays the median values of 
the distribution of the \cbs{} over 30 repetitions, the fourth column the 
median absolute deviation of the distribution of the \cbs{} value over 30 
repetition and the fifth column the theoretical upper bound. Finally the sixth 
column displays the mean execution time in seconds for one repetition of the 
heuristic.

\begin{table}[!ht]
	\centering
	\scriptsize
	\begin{tabular}{|ll|rr|r|r|r|r|} 
	\hline
	\multicolumn{1}{|c}{$G$} &
	\multicolumn{1}{c|}{$H$} &
	\multicolumn{1}{c}{\emph{median \cbs{}}} &
	\multicolumn{1}{c|}{\emph{mad \cbs{}}} &
	\multicolumn{1}{c|}{\emph{ub}} &
	\multicolumn{1}{c|}{\emph{rd}} &
	\multicolumn{1}{c|}{\emph{time (s)}} \\ \hline
		
	\makeatletter{}$P_{5}$ & $K_{5}$ & 200 & 0 & 395 & -0.49 & 0.01 \\ 
$P_{5}$ & $K_{10}$ & 1225 & 0 & 3165 & -0.61 & 0.01 \\ 
$P_{5}$ & $K_{15}$ & 3700 & 0 & 10560 & -0.65 & 0.03 \\ 
$P_{5}$ & $K_{20}$ & 8250 & 0 & 25080 & -0.67 & 0.09 \\ 
$P_{10}$ & $K_{5}$ & 425 & 0 & 1545 & -0.72 & 0.01 \\ 
$P_{10}$ & $K_{10}$ & 2550 & 0 & 12590 & -0.80 & 0.05 \\ 
$P_{10}$ & $K_{15}$ & 7625 & 0 & 42135 & -0.82 & 0.08 \\ 
$P_{10}$ & $K_{20}$ & 16900 & 0 & 100180 & -0.83 & 0.15 \\ 
$P_{15}$ & $K_{5}$ & 650 & 0 & 3445 & -0.81 & 0.01 \\ 
$P_{15}$ & $K_{10}$ & 3875 & 0 & 28265 & -0.86 & 0.05 \\ 
$P_{15}$ & $K_{15}$ & 11550 & 0 & 94710 & -0.88 & 0.13 \\ 
$P_{15}$ & $K_{20}$ & 25550 & 0 & 225280 & -0.89 & 0.27 \\ 
$P_{20}$ & $K_{5}$ & 875 & 0 & 6095 & -0.86 & 0.03 \\ 
$P_{20}$ & $K_{10}$ & 5200 & 0 & 50190 & -0.90 & 0.08 \\ 
$P_{20}$ & $K_{15}$ & 15475 & 0 & 168285 & -0.91 & 0.13 \\ 
$P_{20}$ & $K_{20}$ & 34200 & 0 & 400380 & -0.91 & 0.31 \\

	\hline
	\end{tabular}
	\caption{\label{tab:cart_pk_perf}Results for the Cartesian products of a 
path and a complete graph}

\end{table}

\begin{table}[!ht]
	\centering
	\scriptsize
	\begin{tabular}{|ll|rr|r|r|r|r|} 
	\hline
	\multicolumn{1}{|c}{$G$} &
	\multicolumn{1}{c|}{$H$} &
	\multicolumn{1}{c}{\emph{median \cbs{}}} &
	\multicolumn{1}{c|}{\emph{mad \cbs{}}} &
	\multicolumn{1}{c|}{\emph{ub}} &
	\multicolumn{1}{c|}{\emph{rd}} &
	\multicolumn{1}{c|}{\emph{time (s)}} \\ \hline
		
	\makeatletter{}$C_{5}$ & $K_{5}$ & 225 & 0 & 415 & -0.46 & $<0.01$ \\ 
$C_{5}$ & $K_{10}$ & 1325 & 0 & 3205 & -0.59 & 0.04 \\ 
$C_{5}$ & $K_{15}$ & 3925 & 0 & 10620 & -0.63 & 0.05 \\ 
$C_{5}$ & $K_{20}$ & 8650 & 0 & 25160 & -0.66 & 0.07 \\ 
$C_{10}$ & $K_{5}$ & 450 & 0 & 1590 & -0.72 & 0.02 \\ 
$C_{10}$ & $K_{10}$ & 2650 & 0 & 12680 & -0.79 & 0.02 \\ 
$C_{10}$ & $K_{15}$ & 7850 & 0 & 42270 & -0.81 & 0.07 \\ 
$C_{10}$ & $K_{20}$ & 17300 & 0 & 100360 & -0.83 & 0.15 \\ 
$C_{15}$ & $K_{5}$ & 675 & 0 & 3515 & -0.81 & 0.01 \\ 
$C_{15}$ & $K_{10}$ & 3975 & 0 & 28405 & -0.86 & 0.07 \\ 
$C_{15}$ & $K_{15}$ & 11775 & 0 & 94920 & -0.88 & 0.16 \\ 
$C_{15}$ & $K_{20}$ & 25950 & 0 & 225560 & -0.88 & 0.25 \\ 
$C_{20}$ & $K_{5}$ & 900 & 0 & 6190 & -0.85 & 0.03 \\ 
$C_{20}$ & $K_{10}$ & 5300 & 0 & 50380 & -0.89 & 0.1 \\ 
$C_{20}$ & $K_{15}$ & 15700 & 0 & 168570 & -0.91 & 0.19 \\ 
$C_{20}$ & $K_{20}$ & 34600 & 0 & 400760 & -0.91 & 0.24 \\

	\hline
	\end{tabular}
	\caption{\label{tab:cart_ck_perf}Results for the Cartesian products of a 
cycle and a complete 
graph}

\end{table}

\begin{table}[!ht]
	\centering
	\scriptsize
	\begin{tabular}{|ll|rr|r|r|r|r|} 
	\hline
	\multicolumn{1}{|c}{$G$} &
	\multicolumn{1}{c|}{$H$} &
	\multicolumn{1}{c}{\emph{median \cbs{}}} &
	\multicolumn{1}{c|}{\emph{mad \cbs{}}} &
	\multicolumn{1}{c|}{\emph{ub}} &
	\multicolumn{1}{c|}{\emph{rd}} &
	\multicolumn{1}{c|}{\emph{time (s)}} \\ \hline
		
	\makeatletter{}$K_{5}$ & $K_{5}$ & 547 & 20 & 474 & 0.15 & 0.01 \\ 
$K_{10}$ & $K_{5}$ & 2305 & 2 & 3324 & -0.31 & 0.02 \\ 
$K_{10}$ & $K_{10}$ & 21349 & 466 & 14149 & 0.51 & 0.2 \\ 
$K_{15}$ & $K_{5}$ & 6130 & 7 & 10800 & -0.43 & 0.06 \\ 
$K_{15}$ & $K_{10}$ & 33347 & 18 & 44475 & -0.25 & 0.17 \\ 
$K_{15}$ & $K_{15}$ & 169974 & 1766 & 102900 & 0.65 & 1.07 \\ 
$K_{20}$ & $K_{5}$ & 12571 & 9 & 25399 & -0.51 & 0.1 \\ 
$K_{20}$ & $K_{10}$ & 62644 & 31 & 103299 & -0.39 & 0.23 \\ 
$K_{20}$ & $K_{15}$ & 187710 & 13 & 236199 & -0.21 & 0.47 \\ 
$K_{20}$ & $K_{20}$ & 740458 & 3582 & 426599 & 0.74 & 3.46 \\

	\hline
	\end{tabular}
	\caption{\label{tab:cart_kk_perf}Results for the Cartesian products of 
complete graphs}

\end{table}

\begin{table}[!ht]
	\centering
	\scriptsize
	\begin{tabular}{|ll|rr|r|r|r|r|} 
	\hline
	\multicolumn{1}{|c}{$G$} &
	\multicolumn{1}{c|}{$H$} &
	\multicolumn{1}{c}{\emph{median \cbs{}}} &
	\multicolumn{1}{c|}{\emph{mad \cbs{}}} &
	\multicolumn{1}{c|}{\emph{ub}} &
	\multicolumn{1}{c|}{\emph{rd}} &
	\multicolumn{1}{c|}{\emph{time (s)}} \\ \hline
		
	\makeatletter{}$P_{5}$ & $C_{5}$ & 158 & 7 & 145 & 0.09 & 0.01 \\ 
$P_{5}$ & $C_{10}$ & 481 & 28 & 290 & 0.66 & 0.05 \\ 
$P_{5}$ & $C_{15}$ & 920 & 50 & 435 & 1.11 & 0.06 \\ 
$P_{5}$ & $C_{20}$ & 1508 & 108 & 580 & 1.60 & 0.1 \\ 
$P_{10}$ & $C_{5}$ & 448 & 47 & 545 & -0.18 & 0.02 \\ 
$P_{10}$ & $C_{10}$ & 1472 & 96 & 1090 & 0.35 & 0.12 \\ 
$P_{10}$ & $C_{15}$ & 3031 & 218 & 1635 & 0.85 & 0.33 \\ 
$P_{10}$ & $C_{20}$ & 4493 & 315 & 2180 & 1.06 & 0.65 \\ 
$P_{15}$ & $C_{5}$ & 739 & 112 & 1195 & -0.38 & 0.09 \\ 
$P_{15}$ & $C_{10}$ & 2680 & 108 & 2390 & 0.12 & 0.35 \\ 
$P_{15}$ & $C_{15}$ & 5464 & 338 & 3585 & 0.52 & 0.94 \\ 
$P_{15}$ & $C_{20}$ & 8839 & 410 & 4780 & 0.85 & 1.95 \\ 
$P_{20}$ & $C_{5}$ & 1000 & 94 & 2095 & -0.52 & 0.09 \\ 
$P_{20}$ & $C_{10}$ & 3751 & 333 & 4190 & -0.10 & 0.48 \\ 
$P_{20}$ & $C_{15}$ & 8105 & 519 & 6285 & 0.29 & 1.28 \\ 
$P_{20}$ & $C_{20}$ & 13902 & 816 & 8380 & 0.66 & 3.74 \\

	\hline
	\end{tabular}
	\caption{\label{tab:cart_pc_perf}Results for the Cartesian products of a 
path and a cycle}

\end{table}

\begin{table}[!ht]
	\centering
	\scriptsize
	\begin{tabular}{|ll|rr|r|r|r|r|} 
	\hline
	\multicolumn{1}{|c}{$G$} &
	\multicolumn{1}{c|}{$H$} &
	\multicolumn{1}{c}{\emph{median \cbs{}}} &
	\multicolumn{1}{c|}{\emph{mad \cbs{}}} &
	\multicolumn{1}{c|}{\emph{ub}} &
	\multicolumn{1}{c|}{\emph{rd}} &
	\multicolumn{1}{c|}{\emph{time (s)}} \\ \hline
		
	\makeatletter{}$P_{5}$ & $P_{5}$ & 142 & 13 & 120 & 0.18 & $<0.01$ \\ 
$P_{10}$ & $P_{5}$ & 426 & 47 & 265 & 0.61 & 0.03 \\ 
$P_{10}$ & $P_{10}$ & 1397 & 85 & 990 & 0.41 & 0.08 \\ 
$P_{15}$ & $P_{5}$ & 759 & 78 & 410 & 0.85 & 0.05 \\ 
$P_{15}$ & $P_{10}$ & 2671 & 305 & 1535 & 0.74 & 0.31 \\ 
$P_{15}$ & $P_{15}$ & 5199 & 621 & 3360 & 0.55 & 0.6 \\ 
$P_{20}$ & $P_{5}$ & 1164 & 77 & 555 & 1.10 & 0.12 \\ 
$P_{20}$ & $P_{10}$ & 3882 & 320 & 2080 & 0.87 & 0.41 \\ 
$P_{20}$ & $P_{15}$ & 7734 & 791 & 4555 & 0.70 & 1.49 \\ 
$P_{20}$ & $P_{20}$ & 11858 & 1670 & 7980 & 0.49 & 3.42 \\

	\hline
	\end{tabular}
	\caption{\label{tab:cart_pp_perf}Results for the Cartesian products of 
paths}

\end{table}

\begin{table}[!ht]
	\centering
	\scriptsize
	\begin{tabular}{|ll|rr|r|r|r|r|} 
	\hline
	\multicolumn{1}{|c}{$G$} &
	\multicolumn{1}{c|}{$H$} &
	\multicolumn{1}{c}{\emph{median \cbs{}}} &
	\multicolumn{1}{c|}{\emph{mad \cbs{}}} &
	\multicolumn{1}{c|}{\emph{ub}} &
	\multicolumn{1}{c|}{\emph{rd}} &
	\multicolumn{1}{c|}{\emph{time (s)}} \\ \hline
		
	\makeatletter{}$C_{5}$ & $C_{5}$ & 194 & 7 & 165 & 0.18 & 0.01 \\ 
$C_{10}$ & $C_{5}$ & 488 & 52 & 330 & 0.48 & 0.04 \\ 
$C_{10}$ & $C_{10}$ & 1821 & 68 & 1180 & 0.54 & 0.18 \\ 
$C_{15}$ & $C_{5}$ & 853 & 139 & 495 & 0.72 & 0.11 \\ 
$C_{15}$ & $C_{10}$ & 3376 & 180 & 1770 & 0.91 & 0.42 \\ 
$C_{15}$ & $C_{15}$ & 6610 & 357 & 3795 & 0.74 & 0.79 \\ 
$C_{20}$ & $C_{5}$ & 1086 & 172 & 660 & 0.65 & 0.15 \\ 
$C_{20}$ & $C_{10}$ & 4606 & 545 & 2360 & 0.95 & 0.84 \\ 
$C_{20}$ & $C_{15}$ & 9737 & 728 & 5060 & 0.92 & 2.06 \\ 
$C_{20}$ & $C_{20}$ & 16637 & 705 & 8760 & 0.90 & 3.65 \\

	\hline
	\end{tabular}
	\caption{\label{tab:cart_cc_perf}Results for the Cartesian products of 
cycles}

\end{table}

\clearpage

For the following tables, the first three columns give the name of the 
graph if the graph is from the Harwell-Boeing collection, of the value of 
$p$ is the graph is a random graph, and the number of vertices 
(\emph{$\#V$}) and the number of edges (\emph{$\#E$}) of the graph. The next 
three columns give the results for the heuristic \algo{}, with the median value 
of \cbs{} value over 30 repetitions (\emph{median \cbs{}}), the median absolute 
deviation (\emph{mad \cbs{}}) and the execution time for one execution in 
seconds. (\emph{time}). The next three column five the results for the heuristic 
\gvns{}, with the \cbs{} value obtained without randomizing initial vertex 
ordering (\emph{\cbs{} w/o r}), the \cbs{} value obtained with randomizing 
initial vertex ordering (\emph{\cbs{} w/ r}) and the execution time in seconds 
(\emph{time}). Finally the last column gives the relative distance between 
\emph{median \cbs{}} and \emph{\cbs{} 
w/ r}.
\begin{table}[!ht]
	\centering
	\scriptsize
	\begin{tabular}{|lll|rrr|rrr|r|}
	\hline
\multicolumn{3}{|c}{\textbf{Graph}} &  
\multicolumn{3}{|c|}{\textbf{\algo{}}} & 
\multicolumn{3}{c|}{\textbf{\gvns{}}} & \\ \hline

\multicolumn{1}{|c}{Name} & 
\multicolumn{1}{c}{$\#V$} & 
\multicolumn{1}{c|}{$\#E$} &

\multicolumn{1}{c}{\emph{median \cbs{}}} & 
\multicolumn{1}{c}{\emph{mad \cbs{}}} & 
\multicolumn{1}{c|}{ \emph{time (s)}} &

\multicolumn{1}{c}{\emph{\cbs{} w/o r}} &
\multicolumn{1}{c}{\emph{\cbs{} w/ r}} &
\multicolumn{1}{c|}{ \emph{time (s)}} &
\multicolumn{1}{c|}{\emph{rd}} \\ 
\hline

\makeatletter{}bcspwr01 & 39 & 46 & 106 & 3 & 0.01 & 212 & 241 & 104.0 & -0.56 \\ 
 bcspwr02 & 49 & 59 & 164 & 3 & 0.02 & 350 & 388 & 114.0 & -0.58 \\ 
 bcspwr03 & 118 & 179 & 850 & 64 & 0.23 & 2035 & 3585 & 270.0 & -0.76 \\ 
 bcspwr04 & 274 & 669 & 6280 & 322 & 1.5 & 34135 & 36627 & 1060.0 & -0.83 \\ 
 bcspwr05 & 443 & 590 & 6032 & 440 & 3.78 & 41194 & 51486 & 1590.0 & -0.88 \\ 
 bcspwr06 & 1454 & 1923 & 38434 & 3621 & 89.0 & 84704 & 622141 & 8620.0 & -0.94 \\ 
 dwt59 & 59 & 104 & 322 & 29 & 0.03 & 545 & 923 & 189.0 & -0.65 \\ 
 dwt72 & 72 & 75 & 204 & 10 & 0.08 & 180 & 744 & 127.0 & -0.73 \\ 
 dwt87 & 87 & 227 & 1118 & 86 & 0.11 & 2785 & 3339 & 392.0 & -0.67 \\ 
 dwt162 & 162 & 510 & 3235 & 669 & 0.35 & 5779 & 15628 & 349.0 & -0.79 \\ 
 dwt193 & 193 & 1650 & 30811 & 1322 & 0.77 & 46150 & 66408 & 1590.0 & -0.54 \\ 
 dwt221 & 221 & 704 & 6633 & 642 & 0.7 & 15414 & 30433 & 972.0 & -0.78 \\ 
 dwt419 & 419 & 1572 & 24779 & 1231 & 3.8 & 84803 & 141523 & 3510.0 & -0.82 \\ 
 dwt592 & 592 & 2256 & 43185 & 2119 & 9.47 & 72302 & 292871 & 3700.0 & -0.85 \\ 
 dwt992 & 992 & 7876 & 286660 & 14260 & 34.1 & 800566 & 1827354 & 23800.0 & -0.84 \\ 
 can24 & 24 & 68 & 207 & 11 & 0.02 & 232 & 229 & 76.7 & -0.10 \\ 
 can61 & 61 & 248 & 1553 & 0 & 0.02 & 2385 & 2556 & 291.0 & -0.39 \\ 
 can62 & 62 & 78 & 247 & 15 & 0.03 & 389 & 713 & 125.0 & -0.65 \\ 
 can73 & 73 & 152 & 1003 & 43 & 0.12 & 1413 & 1838 & 155.0 & -0.45 \\ 
 can96 & 96 & 336 & 2512 & 145 & 0.12 & 3884 & 5535 & 237.0 & -0.55 \\ 
 can144 & 144 & 576 & 11204 & 94 & 0.06 & 12055 & 15342 & 490.0 & -0.27 \\ 
 can187 & 187 & 652 & 5363 & 775 & 0.57 & 14112 & 23658 & 1070.0 & -0.77 \\ 
 can229 & 229 & 774 & 11272 & 1191 & 0.83 & 18362 & 35230 & 972.0 & -0.68 \\ 
 can268 & 268 & 1407 & 36046 & 2867 & 1.03 & 44034 & 77680 & 1510.0 & -0.54 \\ 
 can715 & 715 & 2975 & 87773 & 4651 & 12.7 & 149144 & 473748 & 4830.0 & -0.81 \\ 
 can838 & 838 & 4586 & 373361 & 16932 & 15.4 & 547732 & 880386 & 6560.0 & -0.58 \\ 
 can1054 & 1054 & 5571 & 335935 & 30665 & 33.2 & 562451 & 1359525 & 12800.0 & -0.75 \\ 
  
\hline

\end{tabular}

	\caption{\label{tab:harwell_perf}Results for the graph from the 
Harwell-Boeing collection.}
\end{table}

\begin{table}[!ht]
	\centering
	\scriptsize
	\begin{tabular}{|lll|rrr|rrr|r|}
	\hline
\multicolumn{3}{|c}{\textbf{Graph}} &  \multicolumn{3}{|c|}{\textbf{\algo{}}} & 
\multicolumn{3}{c|}{\textbf{\gvns{}}} & \\ \hline

\multicolumn{1}{|c}{p} & \multicolumn{1}{c}{$\#V$} & 
\multicolumn{1}{c|}{$\#E$} &

\multicolumn{1}{|c}{\emph{median \cbs{}}} & \multicolumn{1}{c}{\emph{mad 
\cbs{}}} & \multicolumn{1}{c|}{ \emph{time (s)}} &

\multicolumn{1}{|c}{\emph{\cbs{} w/o r}} & \multicolumn{1}{c}{\emph{\cbs{} w/ 
r}} & \multicolumn{1}{c|}{ \emph{time (s)}} & 
\multicolumn{1}{c|}{\emph{rd}} \\ 
\hline

\makeatletter{}0.1 & 0 & 455 & 11236 & 0  & 0.32 & 9315 & 9355 & 413.0 & 0.20 \\ 
 0.1 & 1 & 524 & 13169 & 0  & 0.29 & 10600 & 10883 & 470.0 & 0.21 \\ 
 0.1 & 2 & 497 & 12682 & 0  & 0.31 & 10379 & 10423 & 376.0 & 0.22 \\ 
 0.1 & 3 & 501 & 13013 & 0  & 0.26 & 10194 & 10225 & 383.0 & 0.27 \\ 
 0.1 & 4 & 533 & 14011 & 0  & 0.3 & 10901 & 11191 & 408.0 & 0.25 \\ 
 0.1 & 5 & 488 & 12295 & 0  & 0.28 & 10105 & 10317 & 401.0 & 0.19 \\ 
 0.1 & 6 & 510 & 12772 & 0  & 0.28 & 10783 & 10303 & 520.0 & 0.24 \\ 
 0.1 & 7 & 487 & 12439 & 0  & 0.21 & 10081 & 10094 & 578.0 & 0.23 \\ 
 0.1 & 8 & 465 & 11654 & 0  & 0.14 & 9529 & 9401 & 578.0 & 0.24 \\ 
 0.1 & 9 & 517 & 13036 & 0  & 0.32 & 10703 & 10866 & 660.0 & 0.20 \\ 
 0.3 & 0 & 1500 & 37587 & 0  & 0.52 & 33331 & 34209 & 2110.0 & 0.10 \\ 
 0.3 & 1 & 1449 & 35784 & 0  & 0.24 & 32354 & 33373 & 2420.0 & 0.07 \\ 
 0.3 & 2 & 1494 & 38841 & 0  & 0.36 & 34174 & 34139 & 3120.0 & 0.14 \\ 
 0.3 & 3 & 1537 & 39000 & 0  & 0.16 & 35006 & 35281 & 3080.0 & 0.11 \\ 
 0.3 & 4 & 1541 & 39524 & 0  & 0.4 & 35246 & 35424 & 2660.0 & 0.12 \\ 
 0.3 & 5 & 1475 & 36670 & 0  & 0.35 & 33872 & 33975 & 2450.0 & 0.08 \\ 
 0.3 & 6 & 1463 & 36893 & 0  & 0.24 & 33488 & 32930 & 2390.0 & 0.12 \\ 
 0.3 & 7 & 1511 & 38280 & 0  & 0.25 & 34399 & 34551 & 2720.0 & 0.11 \\ 
 0.3 & 8 & 1482 & 38131 & 0  & 0.32 & 33813 & 33962 & 2770.0 & 0.12 \\ 
 0.3 & 9 & 1506 & 37388 & 0  & 0.41 & 34492 & 34374 & 3000.0 & 0.09 \\ 
 0.5 & 0 & 2406 & 60937 & 0  & 0.4 & 56308 & 55603 & 3880.0 & 0.10 \\ 
 0.5 & 1 & 2482 & 62238 & 0  & 0.32 & 57704 & 58516 & 6050.0 & 0.06 \\ 
 0.5 & 2 & 2446 & 61217 & 0  & 0.26 & 57293 & 56651 & 5290.0 & 0.08 \\ 
 0.5 & 3 & 2437 & 61659 & 0  & 0.42 & 57678 & 56448 & 5650.0 & 0.09 \\ 
 0.5 & 4 & 2471 & 62437 & 0  & 0.4 & 58010 & 57770 & 5200.0 & 0.08 \\ 
 0.5 & 5 & 2503 & 62970 & 0  & 0.7 & 58613 & 58754 & 4480.0 & 0.07 \\ 
 0.5 & 6 & 2500 & 63012 & 0  & 0.47 & 58612 & 58699 & 4800.0 & 0.07 \\ 
 0.5 & 7 & 2453 & 61796 & 0  & 0.58 & 57868 & 57506 & 2820.0 & 0.07 \\ 
 0.5 & 8 & 2468 & 61736 & 0  & 0.45 & 58727 & 58030 & 3590.0 & 0.06 \\ 
 0.5 & 9 & 2425 & 60957 & 0  & 0.42 & 57344 & 56866 & 3730.0 & 0.07 \\ 
 0.7 & 0 & 3468 & 87649 & 0  & 0.66 & 83380 & 84035 & 8580.0 & 0.04 \\ 
 0.7 & 1 & 3470 & 88077 & 0  & 0.46 & 82910 & 83215 & 9900.0 & 0.06 \\ 
 0.7 & 2 & 3432 & 86837 & 0  & 0.61 & 82959 & 82992 & 8300.0 & 0.05 \\ 
 0.7 & 3 & 3437 & 87006 & 0  & 0.66 & 83165 & 82268 & 7250.0 & 0.06 \\ 
 0.7 & 4 & 3450 & 87801 & 0  & 0.46 & 83799 & 82603 & 4770.0 & 0.06 \\ 
 0.7 & 5 & 3436 & 87372 & 0  & 0.48 & 83288 & 82933 & 5360.0 & 0.05 \\ 
 0.7 & 6 & 3473 & 88352 & 0  & 0.65 & 84383 & 83628 & 6700.0 & 0.06 \\ 
 0.7 & 7 & 3483 & 88322 & 0  & 0.46 & 84468 & 84277 & 8700.0 & 0.05 \\ 
 0.7 & 8 & 3443 & 87339 & 0  & 0.78 & 83285 & 81972 & 6640.0 & 0.07 \\ 
 0.7 & 9 & 3518 & 88014 & 0  & 0.64 & 84904 & 84211 & 6210.0 & 0.05 \\ 
 0.9 & 0 & 4428 & 111117 & 0  & 0.67 & 108761 & 109564 & 13200.0 & 0.01 \\ 
 0.9 & 1 & 4480 & 113297 & 0  & 0.67 & 110700 & 110148 & 12500.0 & 0.03 \\ 
 0.9 & 2 & 4459 & 113026 & 0  & 0.87 & 110413 & 110317 & 9540.0 & 0.02 \\ 
 0.9 & 3 & 4468 & 112714 & 0  & 0.92 & 110782 & 109953 & 6480.0 & 0.03 \\ 
 0.9 & 4 & 4469 & 113236 & 0  & 0.68 & 109824 & 109934 & 8890.0 & 0.03 \\ 
 0.9 & 5 & 4460 & 112747 & 0  & 0.67 & 110390 & 109940 & 11100.0 & 0.03 \\ 
 0.9 & 6 & 4433 & 111787 & 0  & 0.68 & 108919 & 108971 & 8780.0 & 0.03 \\ 
 0.9 & 7 & 4471 & 112802 & 0  & 0.88 & 110596 & 110817 & 8070.0 & 0.02 \\ 
 0.9 & 8 & 4457 & 112511 & 0  & 0.5 & 109174 & 109762 & 6560.0 & 0.03 \\ 
 0.9 & 9 & 4451 & 112342 & 0  & 0.66 & 110106 & 109668 & 5300.0 & 0.02 \\ 
  
\hline

\end{tabular}

	\caption{\label{tab:random_perf}Results for the random graphs.}
\end{table}

\clearpage

\subsection{Robustness}

For the following tables, the first column gives the name of the 
graph if the graph is from the Harwell-Boeing collection, of the value of 
$p$ is the graph is a random graph. The next two columns give the results 
for the heuristic \algo{}, with the median value of \cbs{} value over 30 
repetitions (\emph{median \cbs{}}) and the median absolute deviation (\emph{mad 
\cbs{}}) when $k$ repetitions are performed. The next four columns give the 
results for respectively $k=20$ and $k=30$. Finally the last column 
gives the minimum value of \cbs{} achieved for all repetitions.

\begin{table}[!ht]
	\centering
	\scriptsize
	\begin{tabular}{|l|rr|rr|rr|r|}
	\hline
\multicolumn{1}{|c|}{\textbf{Graph}} & 

\multicolumn{2}{c|}{$\bf{k=10}$} & 
\multicolumn{2}{c|}{$\bf{k=20}$} &  
\multicolumn{2}{c|}{$\bf{k=50}$} & \\ 
\hline	

\multicolumn{1}{|c|}{\emph{Name}} &

\multicolumn{1}{c}{\emph{median \cbs{}}} & 
\multicolumn{1}{c|}{\emph{mad \cbs{}}} &

\multicolumn{1}{c}{\emph{median \cbs{}}} & 
\multicolumn{1}{c|}{\emph{mad \cbs{}}} &

\multicolumn{1}{c}{\emph{median \cbs{}}} & 
\multicolumn{1}{c|}{\emph{mad \cbs{}}} &

\multicolumn{1}{c|}{\emph{min \cbs{}}}

\\ \hline

\makeatletter{}bcspwr01 & 102 & 1 & 101 & 1 & 100 & 1 & 99 \\ 
bcspwr02 & 157 & 1 & 156 & 0 & 155 & 0 & 154 \\ 
bcspwr03 & 784 & 17 & 767 & 10 & 756 & 6 & 722 \\ 
bcspwr04 & 5159 & 112 & 5083 & 200 & 4918 & 121 & 4543 \\ 
bcspwr05 & 5290 & 147 & 5166 & 105 & 5049 & 106 & 4469 \\ 
dwt59 & 280 & 4 & 277 & 4 & 274 & 3 & 261 \\ 
dwt72 & 184 & 6 & 183 & 5 & 177 & 4 & 169 \\ 
dwt87 & 1011 & 12 & 1000 & 8 & 995 & 5 & 979 \\ 
dwt162 & 2309 & 127 & 2228 & 75 & 2155 & 68 & 2025 \\ 
dwt193 & 28335 & 568 & 27340 & 478 & 27364 & 523 & 25693 \\ 
dwt221 & 5078 & 179 & 5087 & 147 & 4773 & 92 & 4475 \\ 
can24 & 192 & 2 & 192 & 2 & 190 & 0 & 190 \\ 
can61 & 1553 & 0 & 1553 & 0 & 1553 & 0 & 1553 \\ 
can62 & 226 & 5 & 220 & 5 & 215 & 4 & 203 \\ 
can73 & 949 & 17 & 940 & 14 & 914 & 16 & 888 \\ 
can96 & 2055 & 86 & 2051 & 91 & 1929 & 32 & 1832 \\ 
can144 & 11110 & 0 & 11110 & 0 & 11110 & 0 & 11110 \\ 
can187 & 4073 & 84 & 4031 & 135 & 3920 & 117 & 3544 \\ 
can229 & 8923 & 361 & 8530 & 260 & 8435 & 298 & 7574 \\ 
can268 & 30414 & 962 & 29929 & 969 & 28289 & 654 & 24826 \\

\hline
\end{tabular}

	\caption{\label{tab:harwell_sto}Results for the graph from the 
Harwell-Boeing collection}
\end{table}

\begin{table}[!ht]
	\centering
	\scriptsize
	\begin{tabular}{|l|rr|rr|rr|r|}
	\hline
\multicolumn{1}{|c|}{\textbf{Graph}} & \multicolumn{2}{c|}{$\bf{k=10}$} & 
\multicolumn{2}{c|}{$\bf{k=20}$} &  \multicolumn{2}{c|}{$\bf{k=50}$} & \\ 
\hline	

\multicolumn{1}{|c|}{\emph{p}} &
\multicolumn{1}{c}{\emph{median \cbs{}}} & \multicolumn{1}{c|}{\emph{mad 
\cbs{}}} &
\multicolumn{1}{c}{\emph{median \cbs{}}} & \multicolumn{1}{c|}{\emph{mad 
\cbs{}}} &
\multicolumn{1}{c}{\emph{median \cbs{}}} & \multicolumn{1}{c|}{\emph{mad 
\cbs{}}} &
\multicolumn{1}{|c|}{\emph{min \cbs{}}}

\\ \hline

\makeatletter{}0.9 & 11236 & 0 & 11236 & 0 & 11236 & 0 & 11236 \\ 
0.9 & 13169 & 0 & 13169 & 0 & 13169 & 0 & 13169 \\ 
0.9 & 12682 & 0 & 12682 & 0 & 12682 & 0 & 12682 \\ 
0.9 & 13013 & 0 & 13013 & 0 & 13013 & 0 & 13013 \\ 
0.9 & 14011 & 0 & 14011 & 0 & 14011 & 0 & 14011 \\ 
0.9 & 12295 & 0 & 12295 & 0 & 12295 & 0 & 12295 \\ 
0.9 & 12772 & 0 & 12772 & 0 & 12772 & 0 & 12772 \\ 
0.9 & 12439 & 0 & 12439 & 0 & 12439 & 0 & 12439 \\ 
0.9 & 11654 & 0 & 11654 & 0 & 11654 & 0 & 11654 \\ 
0.9 & 13036 & 0 & 13036 & 0 & 13036 & 0 & 13036 \\ 
0.9 & 37587 & 0 & 37587 & 0 & 37587 & 0 & 37587 \\ 
0.9 & 35784 & 0 & 35784 & 0 & 35784 & 0 & 35784 \\ 
0.9 & 38841 & 0 & 38841 & 0 & 38841 & 0 & 38841 \\ 
0.9 & 39000 & 0 & 39000 & 0 & 39000 & 0 & 39000 \\ 
0.9 & 39524 & 0 & 39524 & 0 & 39524 & 0 & 39524 \\ 
0.9 & 36670 & 0 & 36670 & 0 & 36670 & 0 & 36670 \\ 
0.9 & 36893 & 0 & 36893 & 0 & 36893 & 0 & 36893 \\ 
0.9 & 38280 & 0 & 38280 & 0 & 38280 & 0 & 38280 \\ 
0.9 & 38131 & 0 & 38131 & 0 & 38131 & 0 & 38131 \\ 
0.9 & 37388 & 0 & 37388 & 0 & 37388 & 0 & 37388 \\ 
0.9 & 60937 & 0 & 60937 & 0 & 60937 & 0 & 60937 \\ 
0.9 & 62238 & 0 & 62238 & 0 & 62238 & 0 & 62238 \\ 
0.9 & 61217 & 0 & 61217 & 0 & 61217 & 0 & 61217 \\ 
0.9 & 61659 & 0 & 61659 & 0 & 61659 & 0 & 61659 \\ 
0.9 & 62437 & 0 & 62437 & 0 & 62437 & 0 & 62437 \\ 
0.9 & 62970 & 0 & 62970 & 0 & 62970 & 0 & 62970 \\ 
0.9 & 63012 & 0 & 63012 & 0 & 63012 & 0 & 63012 \\ 
0.9 & 61796 & 0 & 61796 & 0 & 61796 & 0 & 61796 \\ 
0.9 & 61736 & 0 & 61736 & 0 & 61736 & 0 & 61736 \\ 
0.9 & 60957 & 0 & 60957 & 0 & 60957 & 0 & 60957 \\ 
0.9 & 87649 & 0 & 87649 & 0 & 87649 & 0 & 87649 \\ 
0.9 & 88077 & 0 & 88077 & 0 & 88077 & 0 & 88077 \\ 
0.9 & 86837 & 0 & 86837 & 0 & 86837 & 0 & 86837 \\ 
0.9 & 87006 & 0 & 87006 & 0 & 87006 & 0 & 87006 \\ 
0.9 & 87801 & 0 & 87801 & 0 & 87801 & 0 & 87801 \\ 
0.9 & 87372 & 0 & 87372 & 0 & 87372 & 0 & 87372 \\ 
0.9 & 88352 & 0 & 88352 & 0 & 88352 & 0 & 88352 \\ 
0.9 & 88322 & 0 & 88322 & 0 & 88322 & 0 & 88322 \\ 
0.9 & 87339 & 0 & 87339 & 0 & 87339 & 0 & 87339 \\ 
0.9 & 88014 & 0 & 88014 & 0 & 88014 & 0 & 88014 \\ 
0.9 & 111117 & 0 & 111117 & 0 & 111117 & 0 & 111117 \\ 
0.9 & 113297 & 0 & 113297 & 0 & 113297 & 0 & 113297 \\ 
0.9 & 113026 & 0 & 113026 & 0 & 113026 & 0 & 113026 \\ 
0.9 & 112714 & 0 & 112714 & 0 & 112714 & 0 & 112714 \\ 
0.9 & 113236 & 0 & 113236 & 0 & 113236 & 0 & 113236 \\ 
0.9 & 112747 & 0 & 112747 & 0 & 112747 & 0 & 112747 \\ 
0.9 & 111787 & 0 & 111787 & 0 & 111787 & 0 & 111787 \\ 
0.9 & 112802 & 0 & 112802 & 0 & 112802 & 0 & 112802 \\ 
0.9 & 112511 & 0 & 112511 & 0 & 112511 & 0 & 112511 \\ 
0.9 & 112342 & 0 & 112342 & 0 & 112342 & 0 & 112342 \\

\hline
\end{tabular}

	\caption{\label{tab:random_sto}Results for the random graphs.}
\end{table}

\begin{table}[!ht]
	\centering
	\scriptsize
	\begin{tabular}{|lr|rr|rr|rr|r|}
	\hline
\multicolumn{2}{|c|}{\textbf{Graph}}  & 
\multicolumn{2}{c|}{$\bf{k=10}$} & \multicolumn{2}{c|}{$\bf{k=20}$} &  
\multicolumn{2}{c|}{$\bf{k=50}$} & \\ \hline	

\multicolumn{1}{|c}{\emph{Name}} & \multicolumn{1}{c|}{\emph{Repetition}} &
\multicolumn{1}{c}{\emph{median \cbs{}}} & \multicolumn{1}{c|}{\emph{mad 
\cbs{}}} &
\multicolumn{1}{c}{\emph{median \cbs{}}} & \multicolumn{1}{c|}{\emph{mad 
\cbs{}}} &
\multicolumn{1}{c}{\emph{median \cbs{}}} & \multicolumn{1}{c|}{\emph{mad 
\cbs{}}} &
\multicolumn{1}{|c|}{\emph{min \cbs{}}}

\\ \hline

\makeatletter{}SF & 0 & 373 & 7 & 373 & 7 & 373 & 7 & 358 \\ 
SF & 1 & 374 & 2 & 374 & 2 & 374 & 2 & 366 \\ 
SF & 2 & 381 & 3 & 381 & 3 & 381 & 3 & 366 \\ 
SF & 3 & 315 & 3 & 315 & 3 & 315 & 3 & 305 \\ 
SF & 4 & 391 & 1 & 391 & 1 & 391 & 1 & 388 \\ 
SF & 5 & 306 & 3 & 306 & 3 & 306 & 3 & 296 \\ 
SF & 6 & 351 & 1 & 351 & 1 & 351 & 1 & 345 \\ 
SF & 7 & 379 & 1 & 379 & 1 & 379 & 1 & 374 \\ 
SF & 8 & 401 & 2 & 401 & 2 & 401 & 2 & 394 \\ 
SF & 9 & 354 & 3 & 354 & 3 & 354 & 3 & 347 \\ 
SW & 0 & 752 & 0 & 752 & 0 & 752 & 0 & 752 \\ 
SW & 1 & 937 & 0 & 937 & 0 & 937 & 0 & 937 \\ 
SW & 2 & 662 & 0 & 662 & 0 & 662 & 0 & 662 \\ 
SW & 3 & 977 & 0 & 977 & 0 & 977 & 0 & 977 \\ 
SW & 4 & 571 & 0 & 571 & 0 & 571 & 0 & 571 \\ 
SW & 5 & 741 & 0 & 741 & 0 & 741 & 0 & 741 \\ 
SW & 6 & 742 & 0 & 742 & 0 & 742 & 0 & 742 \\ 
SW & 7 & 891 & 0 & 891 & 0 & 891 & 0 & 891 \\ 
SW & 8 & 563 & 0 & 563 & 0 & 563 & 0 & 563 \\ 
SW & 9 & 1035 & 0 & 1035 & 0 & 1035 & 0 & 1035 \\ 
COM & 0 & 26338 & 20 & 26338 & 20 & 26338 & 20 & 26292 \\ 
COM & 1 & 27352 & 37 & 27352 & 37 & 27352 & 37 & 27315 \\ 
COM & 2 & 27880 & 10 & 27880 & 10 & 27880 & 10 & 27831 \\ 
COM & 3 & 31573 & 0 & 31573 & 0 & 31573 & 0 & 31573 \\ 
COM & 4 & 31599 & 49 & 31599 & 49 & 31599 & 49 & 31515 \\ 
COM & 5 & 27067 & 49 & 27067 & 49 & 27067 & 49 & 26822 \\ 
COM & 6 & 29546 & 22 & 29546 & 22 & 29546 & 22 & 29400 \\ 
COM & 7 & 28102 & 136 & 28102 & 136 & 28102 & 136 & 27890 \\ 
COM & 8 & 27571 & 22 & 27571 & 22 & 27571 & 22 & 27549 \\ 
COM & 9 & 26943 & 0 & 26943 & 0 & 26943 & 0 & 26943 \\

\hline
\end{tabular}

	\caption{\label{tab:complex_sto}Results for the complex networks.}
\end{table}

\clearpage

\bibliographystyle{plain}
\bibliography{bib}

\end{document}